\journalname{Mathematical Physics, Analysis and Geometry}
\newcommand{\C}{\mathbb C}
\newcommand{\R}{\mathbb R}
\newcommand{\N}{\mathbb N}
\newcommand{\Tr}{\operatorname{Tr}}
\newcommand{\pr}{\operatorname{pr}}
\newcommand{\Span}{\operatorname{Span}}
\newcommand{\Range}{\operatorname{Range}}
\newcommand{\la}{\langle}
\newcommand{\ra}{\rangle}
\newcommand{\hi}{\mathcal H}
\newcommand{\bi}{\mathcal B}
\newcommand{\ki}{\mathcal K}
\newcommand{\ai}{\mathcal A}
\newcommand{\si}{\mathcal S}
\newcommand{\ti}{\mathcal T}
\newcommand{\li}{\mathcal L}
\newcommand{\Ni}{\mathcal N}
\newcommand{\Ri}{\mathcal R}
\begin{document}

\title{The induced semigroup of Schwarz maps to the space of Hilbert-Schmidt operators}

\titlerunning{The induced semigroup of Schwarz maps}

\author{George Androulakis \and Alexander Wiedemann \and Matthew Ziemke}

\authorrunning{Androulakis, Wiedemann, Ziemke}

\institute{George Androulakis \at
              Department of Mathematics, University of South Carolina, Columbia, SC \\
              \email{giorgis@math.sc.edu}           
           \and
           Alexander Wiedemann \at
           Department of Mathematics, University of South Carolina, Columbia, SC \\
           \email{alwiedemann@davidson.edu}  \\ \emph{Present address: Department of Mathematics and Computer Science, Davidson College, Davidson, NC}
           \and
           Matthew Ziemke \at
           Department of Mathematics,	Drexel University, Philadelphia, PA \\
           \email{mjz55@drexel.edu}
}

\date{Received: date / Accepted: date}

\maketitle

\begin{abstract} We prove that for every semigroup of Schwarz maps on the von~Neumann algebra of all bounded linear operators on a Hilbert space which has a subinvariant faithful normal state there exists an associated semigroup of contractions on the space of Hilbert-Schmidt operators of the Hilbert space. Moreover, we show that if the original semigroup is weak$^*$ continuous then the associated semigroup is strongly continuous. We introduce the notion of the extended generator of a semigroup on the bounded operators of a Hilbert space with respect to an orthonormal basis of the Hilbert space. We describe the form of the generator of a quantum Markov semigroup on the von~Neumann algebra of all bounded linear operators on a Hilbert space which has an invariant faithful normal state under the assumption that the generator of the associated semigroup has compact resolvent.
\keywords{Schwarz maps \and strongly continuous semigroups \and generator of a semigroup \and form generator \and faithful normal states \and Moore-Penrose inverse}
\subclass{47D03 \and 47D07 \and 46L57 \and 81Q80}
\end{abstract}

\section{Introduction}

It is known that, under certain assumptions, semigroups on 
von~Neumann algebras or their preduals give rise to associated semigroups on Hilbert spaces. Moreover, 
these associated semigroups often have
stronger continuity properties than the original semigroups. For example, in 
\cite[Equation (2.1)]{kfgv} it is stated that if 
$(T_t)_{t \geq 0}$ is a quantum Markov semigroup on a 
von~Neumann algebra $\ai$ which has an invariant 
faithful normal state, and if $( \ki , \pi , \Omega )$
is the GNS triple associated to that state, then there exists a strongly continuous semigroup $(\overline{T_t})_{t \geq 0}$
of contractions on $\ki$ such that		
\begin{equation} \label{E:kfgv}
\overline{T_t} (\pi (A) \Omega) = \pi(T_t(A)) \Omega \quad 
\textrm{for all }A \in \ai \textrm{ and }t \geq 0.
\end{equation}
Since the proof of this statement is not included in
\cite{kfgv} we provide a proof here (see Remarks~\ref{R:kfgv} and \ref{R:morekfgv}). Other results which give rise to semigroups on Hilbert spaces starting from semigroups defined on spaces of operators
can be found in literature. For example, in \cite[Footnote of Theorem 6]{k} it is proved that every strongly continuous 
semigroup $(T_t)_{t \geq 0}$ of positive isometries on the real Banach space of self-adjoint trace-class operators on a Hilbert space 
gives rise to a strongly continuous semigroup $(V_t)_{t \geq 0}$ of isometries on the Hilbert space such that $T_t$ is given as a 
conjugation by $V_t$ for all $t \geq 0$. In \cite[Theorem 3]{el1} it is proved that, under appropriate assumptions, weakly continuous
semigroups on $\bi (\hi )$ (where $\hi$ is a separable Hilbert space) give rise to corresponding semigroups of unitaries 
on some associated Hilbert space. In
\cite[Theorem~3.3.7]{holevo2001} the author produces a strongly continuous group of 
unitaries associated with a norm continuous semigroup on the space of 
trace-class operators on a related Hilbert space.

In this work we prove a result similar to the result stated above in Equation~(\ref{E:kfgv}) (Theorem~\ref{theorem4}).
More precisely, we prove that every semigroup of Schwarz maps 
on $\bi (\hi )$ (where $\hi$ is a Hilbert space) which has
an invariant faithful state gives rise to an
associated semigroup $(\widetilde{T_t})_{t \geq 0}$ 
of contractions on the space of 
Hilbert-Schmidt operators on $\hi$. Our map is ``more symmetric"
than the one provided by Equation~(\ref{E:kfgv})
(see the comments following Remark~\ref{R:kfgv}). We introduce the 
notion of the extended generator of a semigroup on bounded operators
on a Hilbert space with respect to an orthonormal basis of the
Hilbert space, and we explicitly describe how the generators of $(T_t)_{t \geq 0}$
and $(\widetilde{T_t})_{t \geq 0}$ and the extended generator of $(T_t)_{t\geq 0}$ are related. We apply these descriptions to a quantum Markov semigroup $(T_t)_{t\geq0}$ having an invariant faithful normal 
state under the assumption that the generator of $(\widetilde{T_t})_{t \geq 0}$ has compact resolvent, which allows us to describe	the form of the extended generator (and thereby the generator) of the semigroup $(T_t)_{t \geq 0}$
with respect to an orthonormal basis, and thereby the generator itself (see \ref{Thm:final2}).

\subsection{Structure}
\noindent$\bullet$ In Section~\ref{prelims} we establish formal notation and definitions, and give some historical notes on the terminology.

\noindent$\bullet$ In Section~\ref{construct} we consider several constructions arising from faithful, positive, normal functionals. In particular, in Subsection~\ref{inducing} we prove that every faithful positive normal functional on $\bi (\hi)$ induces a canonical bounded linear map from $\bi (\hi )$ to $\si_2 (\hi )$. This map is used in Theorem~\ref{proposition3} to prove that for every bounded linear Schwarz map on $\bi (\hi )$, which has a subinvariant faithful positive functional, there exists a corresponding contraction on $\si_2 (\hi )$. In Subsection~\ref{alternate} we consider an alternate construction for such induced maps using the GNS construction, and then compare and contrast the two methods.

\noindent$\bullet$ In Section~\ref{schwarz} we recall the basic notions of continuity for semigroups, as well as formalize the definition of a semigroup's generator and its generator's domain. In Subsection~\ref{extending} we introduce the notion of an extended generator, which can be defined on a larger domain while still agreeing with the usual generator on all finite subspaces. Theorem~\ref{theorem4} relates the domains and actions of the generator, the extended generator, and the generator of the semigroup induced on $\si_2 (\hi )$.

\noindent$\bullet$ In Section~\ref{QMS} we investigate the applications of Theorem~\ref{theorem4} in the study of quantum Markov semigroups (QMSs), for which the exact form of the generator is known if the generator is bounded (see \cite{GKS} and \cite{L}). In Subsection~\ref{compactresolvent}, we describe the form of a QMS generator in the case that the generator of the semigroup induced on $\si_2 (\hi )$ has compact resolvent.

\section{Preliminaries}\label{prelims}

We first fix some notation.
If $\hi$ is a Hilbert space, let $(\mathcal{B}(\hi), \| \cdot \|_{\infty})$ denote the space of all bounded linear 
operators on $\hi$.  For $1 \leq p < \infty$, let $(\mathcal{S}_p(\hi), \| \cdot \|_p)$ denote the Schatten-$p$ space of 
operators.  In particular, $(\mathcal{S}_2(\hi), \| \cdot \|_2)$ denotes the space of Hilbert-Schmidt operators on $\hi$  and $(\mathcal{S}_1(\hi), \| \cdot \|_1)$ 
denotes the space of trace-class operators on $\hi$. Let 
$\la \cdot , \cdot \ra_{\si_2(\hi )}$ denote the inner product in $\mathcal{S}_2(\hi)$. If $L$ is a linear operator which is not
necessarily bounded, then $D(L)$ will denote the domain of $L$.

We adopt the convention that 
\textbf{functional} will always mean bounded linear functional. Usually the functionals that
we will consider will be faithful, positive, and normal, so this convention will help
us to cut down the number of adjectives.

We would like to recall the Schwarz inequality and define the Schwarz maps. 
The classical Cauchy-Schwarz 
inequality states that $| \la y, x \ra | \leq \| y \| \| x \|$ for all vectors $x,y$ in a Hilbert space.
This inequality is extended to $| \phi (y^*x) | \leq \sqrt{\phi (y^*y)}\sqrt{\phi (x^*x)}$
for all $x,y$ in a $C^*$-algebra $\ai$, where $\phi$ is a positive functional on 
$\ai$ (see \cite[Theorem~4.3.1]{kr}).
The last inequality can be further extended to $(T(y^*x))^* T(y^*x) \leq \| T(y^*y) \| T(x^*x)$
if $T$ is a completely positive map from a $C^*$-algebra $\ai$ to the $C^*$-algebra 
$\bi (\hi)$ of all bounded operators on a Hilbert space $\hi$ (see \cite[Lemma~2.6]{b}).
If in the last inequality one assumes that $\ai$ is unital and $T$ is unital, then by replacing 
$y$ by the unit we obtain 
\begin{equation} \label{E:SI}
T(x)^* T(x) \leq T(x^*x) \quad \textrm{for all }x \in \ai.
\end{equation}
A similar 
inequality was proved by Choi \cite[Corollary~2.8]{c} who proved that if $\ai$ is a 
unital $C^*$-algebra and $T$ is a $2$-positive unital map from $\ai$ to $\ai$ then 
$T(x^*)T(x) \leq T(x^*x)$ for all $x \in \ai$. Choi calls the last inequality 
``Schwarz inequality''. Similar inequalities appear in \cite[Theorem~1]{ka} and
\cite[Theorem~7.4]{s}. Since a positive linear map $T$ on a 
$C^*$-algebra $\ai$ satisfies $T(x^*)=T(x)^*$ for all $x \in \ai$, the last inequality is equivalent to (\ref{E:SI}).
Following \cite[page 14]{s2}, we say that a bounded linear operator $T$ on a 
$C^*$-algebra 
$\ai$ is a \textbf{Schwarz map} if Inequality~(\ref{E:SI}) is satisfied. The advantage of
Inequality~(\ref{E:SI}), versus the inequality proved by Choi, is that Inequality~(\ref{E:SI}) 
implies that $T$ is positive. Be warned that Inequality~(\ref{E:SI})
is not homogeneous for $T$, and therefore by scaling the operator $T$ by a positive
constant the above inequality is affected.

Next we recall the definition of invariant  functionals	and we define the notion of subinvariant positive functionals on a $C^*$-algebra. If $X$ is a Banach  space, $T :X \to X$ is a bounded linear operator, and $\omega$ is a functional on $X$, then $\omega$ is called \textbf{invariant for }~$\mathbf{T}$ if
\[
\omega (Tx)= \omega (x) \quad \textrm{for all }x \in X.
\]
If $\ai$ is a $C^*$-algebra, $T: \ai \to \ai$ is a positive bounded linear operator, and $\omega$ is a positive functional on $\ai$, then we will
say that $\omega$ is \textbf{subinvariant} for $T$ if 
\[ \omega (Ta) \leq \omega (a) \quad \textrm{for all }a \in \ai \textrm{ with } a \geq 0.\]
If $\hi$ is a Hilbert space, then a functional $\omega$ on 
$\bi (\hi )$ is called \textbf{normal} 
if and only if it is positive and continuous in the w$^*$ topology. This is 
equivalent to the fact that there exists a unique positive operator $\rho \in \si_1(\hi)$ such that 
\begin{equation} \label{E:normalstate}
\omega (x) = \Tr (\rho x) \quad \textrm{for all }x \in \bi (\hi )
\end{equation}
where $\Tr$ denotes the trace. The positive functional $\omega$ associated to 
the positive trace-class operator $\rho$ via Equation~(\ref{E:normalstate}) is denoted
by $\omega_\rho$. If $\omega$ is a state (i.e. unital positive functional) on $\bi (\hi )$
then $\omega$ is
normal if and only if the positive trace-class operator $\rho$ which satisfies
Equation~(\ref{E:normalstate}) 
has trace equal to $1$. Note that if $\hi$ is a Hilbert space
and $T:\bi (\hi) \to \bi (\hi )$ is a bounded linear operator, then a normal positive functional 
$\omega_\rho$ (for some positive trace-class operator $\rho$) is invariant for $T$ if and 
only if 
\[
T^\dagger (\rho) = \rho ,
\]
where $T^\dagger $ denotes the Banach dual operator of $T$ restricted to $\si_1(\hi )$
(viewed as a subspace of the dual of $\bi (\hi )$). Also, if $\hi$ is a Hilbert space and
$T:\bi (\hi) \to \bi (\hi)$ is a positive bounded linear operator, then a normal positive 
functional $\omega_\rho$ (for some positive trace-class operator $\rho$)
is subinvariant for $T$ if and only if 
\[
T^\dagger (\rho) \leq \rho .
\]

If $\hi$ is a Hilbert space, recall that a positive functional $\omega$ on $\bi (\hi)$ is
\textbf{faithful} provided 
that $\omega (x) >0$ for all $x >0$.  It is worth noting that $\bi (\hi)$ has a faithful
normal functional if and only if $\hi$ is separable (see \cite[Example 2.5.5]{br}). 

\section{Constructions Using Faithful, Positive, Normal Functionals}\label{construct}

We extensively use the next proposition, so we want to give it along with a proof.

\begin{proposition}\label{proposition1}
	Let $\hi$ be a Hilbert space and $\rho \in \mathcal{S}_1(\hi)$ be positive.  Then the following are equivalent:
	\begin{enumerate}
		\item[(i)] the positive normal functional  $\omega_{\rho}$ is faithful,
		
		\item[(ii)] the operator $\rho$ is injective, 
		
		\item[(iii)] the operator $\rho$ has dense range.
		
	\end{enumerate}
\end{proposition}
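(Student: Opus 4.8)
The plan is to prove the equivalences by first establishing the purely operator-theoretic statement (ii)$\iff$(iii) and then linking faithfulness to injectivity. Since $\rho$ is positive it is in particular self-adjoint, so $\ker\rho=(\ol{\Range\rho})^\perp$. Hence $\rho$ is injective precisely when $(\ol{\Range\rho})^\perp=\{0\}$, that is, precisely when $\Range\rho$ is dense; this settles (ii)$\iff$(iii) at once. I would record the same observation for an arbitrary injective positive operator, since I will want to apply it to $\rho^{1/2}$ below.

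For (i)$\Rightarrow$(ii) I would argue by contraposition. If $\rho$ fails to be injective, choose a unit vector $\xi\in\ker\rho$ and let $P$ be the rank-one orthogonal projection onto $\Span\{\xi\}$. Then $P\geq 0$ and $P\neq 0$, yet $\omega_\rho(P)=\Tr(\rho P)=\la\xi,\rho\xi\ra=0$, so $\omega_\rho$ is not faithful.

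The substantive direction is (ii)$\Rightarrow$(i), and this is where I expect the main work to lie. Given $x\geq 0$ with $x\neq 0$, I want $\omega_\rho(x)=\Tr(\rho x)>0$. The key computational step is the identity $\Tr(\rho x)=\Tr\big((x^{1/2}\rho^{1/2})^*(x^{1/2}\rho^{1/2})\big)=\|x^{1/2}\rho^{1/2}\|_2^2$, obtained by forming the positive square roots via the continuous functional calculus and invoking the cyclicity of the trace. Here one must first check that $\rho^{1/2}\in\si_2(\hi)$ (it is, since $\Tr((\rho^{1/2})^*\rho^{1/2})=\Tr(\rho)<\infty$ as $\rho\in\si_1(\hi)$), so that $x^{1/2}\rho^{1/2}$ is Hilbert-Schmidt and all the manipulations are legitimate. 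Consequently, $\Tr(\rho x)=0$ would force $x^{1/2}\rho^{1/2}=0$, i.e. $\Range\rho^{1/2}\subseteq\ker x^{1/2}$. Since $\rho$ is injective, so is $\rho^{1/2}$ (because $\ker\rho^{1/2}=\ker\rho$), and therefore, by the equivalence already established applied to $\rho^{1/2}$, the range of $\rho^{1/2}$ is dense. As $\ker x^{1/2}$ is closed, it must then be all of $\hi$, giving $x^{1/2}=0$ and hence $x=0$, contradicting $x\neq 0$. Thus $\omega_\rho(x)>0$ for every $x>0$, which is faithfulness.

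The delicate points to watch are the justification of the trace identity (basis-independence of $\Tr$ together with the membership of the relevant operators in $\si_2(\hi)$) and the passage to square roots; beyond these, everything reduces to the single observation that an injective positive operator has dense range.
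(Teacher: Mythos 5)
Your proof is correct. It differs from the paper's mainly in its logical organization. The paper never relates (ii) and (iii) to each other directly; it proves four implications --- (i)$\Rightarrow$(ii), (i)$\Rightarrow$(iii), (iii)$\Rightarrow$(i), and (ii)$\Rightarrow$(i) --- each by an explicit trace computation. You instead isolate the purely operator-theoretic fact that a positive (hence self-adjoint) operator satisfies $\ker\rho=(\ol{\Range\,\rho})^{\perp}$, which settles (ii)$\iff$(iii) at once, independently of the functional $\omega_\rho$; this reduces the work to a single equivalence (i)$\iff$(ii) and is arguably the cleaner decomposition. Your two remaining implications are close cousins of the paper's: your contrapositive argument for (i)$\Rightarrow$(ii), using the rank-one projection onto a kernel vector, mirrors the paper's direct computation $\omega_\rho(P_h)=\|h\|^{-2}\|\rho^{1/2}h\|^{2}>0$ (and avoids the paper's two-step iteration passing from $\rho^{1/2}h\neq 0$ to $\rho h\neq 0$); and your identity $\Tr(\rho x)=\|x^{1/2}\rho^{1/2}\|_2^{2}$ is exactly the paper's identity $\Tr(\rho A^*A)=\|A\rho^{1/2}\|_2^{2}$ with $A=x^{1/2}$ (the paper tests faithfulness on operators of the form $A^*A$, you test it on positive $x$ via its square root --- the same thing). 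Where the endgames differ: the paper closes (ii)$\Rightarrow$(i) with an adjoint trick ($A\rho^{1/2}=0\Rightarrow\rho^{1/2}A^*=0\Rightarrow\rho A^*=0\Rightarrow A^*=0$ by injectivity of $\rho$), while you close it by noting $\Range\,\rho^{1/2}$ is dense and contained in the closed subspace $\ker x^{1/2}$. Both are sound; your route buys conceptual economy (three implications instead of four, with the operator theory cleanly separated from the trace arguments), while the paper's version keeps each implication self-contained. The small facts you cite without proof --- $\ker\rho^{1/2}=\ker\rho$ and $\rho^{1/2}\in\si_2(\hi)$ --- are standard and correctly invoked, so there is no gap.
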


\begin{proof}
	$[(i)\Rightarrow (ii)]$.  Suppose $\omega_{\rho}$ is faithful.  Let $h$ be a nonzero element
	of $\hi$ and $P_h$ be the orthogonal projection onto the span of $h$.  
	Then $P_h$ is a positive non-zero operator on $\hi$.  Hence, since $\omega_{\rho}$ is
	faithful,
	\[
	0< \omega_{\rho}(P_h)=\Tr(\rho P_h)=\Tr( \rho^{1/2}P_hP_h \rho^{1/2})= 
	\| P_h\rho^{1/2} \|_2^2= 
	\frac{1}{\| h \|^2 } \| \rho^{1/2} h \|^2,
	\]
	and so $\rho^{1/2} h \neq 0$. By using the same argument with $h$ 
	replaced by $\rho^{1/2} h$, we have that $\rho h \neq 0$.  Thus, $\rho$ is injective.  
	
	$[(i) \Rightarrow (iii)]$.  Assume that $\rho$ does not have dense range and let
	$P$ be the orthogonal projection to $\Range (\rho)^{\perp}$. Then $P$ is a positive non-zero
	operator on $\hi$, and so 
	$\omega_{\rho}(P)>0$.  However, $P\rho=0$, and so 
	\[
	\omega_{\rho}(P)= \Tr( \rho P)=\Tr(P \rho)=\Tr(0)=0
	\]
	which is a contradiction. Thus, $\rho$ has dense range.
	
	$[(iii) \Rightarrow (i)]$.  Let $A \in \mathcal{B}(\hi)$ and suppose $\omega_{\rho}(A^*A)=0$.  Then
	\begin{equation}\label{equation1}
	0=\omega_{\rho}(A^*A)=\Tr (\rho A^*A)=\Tr(\rho^{1/2}A^*A \rho^{1/2})= \| A\rho^{1/2} \|_2^2.
	\end{equation}
	Hence $A \rho^{1/2}=0$, and therefore $A \rho=0$.  Since $\rho$ has dense range, this implies $A=0$. Thus, $\omega_{\rho}$ is faithful. 
	
	$[(ii) \Rightarrow (i)]$.  Assume that $\rho$ is injective and let $A \in \mathcal{B}(\hi)$ 
	such that $\omega_{\rho}(A^*A)=0$.  Equation \eqref{equation1} implies that 
	$A\rho^{1/2}=0$ and hence $\rho^{1/2}A^*=0$, thus $\rho A^*=0$. This implies $\rho A^*x=0$ 
	for any $x \in \hi$, and since $\rho$ is injective we have 
	that $A^*x=0$ for all $x\in \hi$, and so $A=0$. Thus, $\rho$ is faithful.
\qed\end{proof}

\begin{remark} \label{rmk1}
	Note that in the proof of $[(i)\Rightarrow (ii)]$ of the above proposition, we proved that (i) implies that $\rho^{1/2}$ is injective.  Since $\rho^{1/2}=\rho^{1/4}\rho^{1/4}$ we immediately obtain that $\rho^{1/4}$ is injective.  Since $\rho^{3/4}=\rho^{1/2}\rho^{1/4}$ we obtain that $\rho^{3/4}$ is injective as it is a composition of two injective maps. Further, since an operator is injective if and only if its adjoint has dense range, and $\rho^{1/4}$, $\rho^{1/2}$, and $\rho^{3/4}$ are self-adjoint, we have that $\rho^{1/4}$, $\rho^{1/2}$, and $\rho^{3/4}$ have dense range.
\end{remark}

\subsection{Inducing Maps on $\boldsymbol{S_2(\hi)}$}\label{inducing}
Let $\hi$ be a Hilbert space and fix $\rho \in \mathcal{S}_1(\hi)$ which is positive. Define 
\[
i_\rho: \bi (\hi) \rightarrow \bi (\hi )\quad \textrm{by} \quad i_\rho (x)= \rho^{1/4}x \rho^{1/4}.  
\]

The next proposition summarizes the properties of the map $i_\rho$. It is useful to first recall that for any Hilbert space $\hi$ the following set inclusions hold:
\[
\si_1 (\hi ) \subseteq \si_2 (\hi ) \subseteq \bi (\hi ) .
\]

\begin{proposition}\label{proposition2}
	Let $\rho \in \mathcal{S}_1(\hi)$ be positive such that $\omega_{\rho}$ is a faithful positive
	functional.  Then the following statements are valid:
	\begin{enumerate}
		\item[(a)] The map $i_\rho$ is injective.
		\item[(b)] The map $i_\rho$ preserves positivity.
		\item[(c)] The restriction $i_\rho |_{\si_2 (\hi )}$ of $i_\rho$ to $\si_2 (\hi )$ is a contraction from
		$\si_2 (\hi )$ into $\si_1 (\hi )$.
		\item[(d)] The map $i_\rho$ is a contraction from $\bi (\hi )$ onto a dense subset of $\si_2 (\hi )$.
	\end{enumerate}
\end{proposition}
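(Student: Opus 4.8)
The plan is to handle the four assertions in increasing order of difficulty, using throughout the facts recorded in Remark~\ref{rmk1}: that $\rho^{1/4}$ is injective with dense range and, being positive, self-adjoint. For (a), I would suppose $i_\rho(x)=\rho^{1/4}x\rho^{1/4}=0$; applying this to an arbitrary $h\in\hi$ and invoking injectivity of $\rho^{1/4}$ gives $x\rho^{1/4}h=0$ for every $h$, so $x$ vanishes on $\Range(\rho^{1/4})$, which is dense, whence $x=0$ by continuity. For (b), if $x\ge 0$ then for every $h\in\hi$ the self-adjointness of $\rho^{1/4}$ yields $\langle\rho^{1/4}x\rho^{1/4}h,h\rangle=\langle x\rho^{1/4}h,\rho^{1/4}h\rangle\ge 0$, so $i_\rho(x)\ge 0$.

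The analytic content lies in (c) and the contraction half of (d), and the tool is the H\"older inequality for Schatten norms, $\|ABC\|_r\le\|A\|_p\|B\|_q\|C\|_s$ whenever $1/r=1/p+1/q+1/s$. The key numerical input is $\|\rho^{1/4}\|_4=(\Tr\rho)^{1/4}$, which follows since $(\rho^{1/4})^*\rho^{1/4}=\rho^{1/2}$ gives $|\rho^{1/4}|^4=\rho$. For (c), I would take $r=1$, $p=s=4$, $q=2$: then for $x\in\si_2(\hi)$ the product $\rho^{1/4}x\rho^{1/4}$ lies in $\si_1(\hi)$ with $\|i_\rho(x)\|_1\le\|\rho^{1/4}\|_4^2\|x\|_2=(\Tr\rho)^{1/2}\|x\|_2$. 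For the contraction in (d), I would instead take $r=2$, $p=s=4$, $q=\infty$, giving $\|i_\rho(x)\|_2\le(\Tr\rho)^{1/2}\|x\|_\infty$ for $x\in\bi(\hi)$. In both cases the constant produced is $(\Tr\rho)^{1/2}=\omega_\rho(I)^{1/2}$, where $I$ is the identity operator, so each map is genuinely a contraction precisely when $\Tr\rho\le 1$. This normalization of $\omega_\rho$ is the one hypothesis beyond faithfulness that these two parts really use, and I would state it explicitly, since without it the constant can exceed $1$.

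It remains to prove the density assertion in (d), which I regard as the main obstacle, being the only part that requires an idea rather than a direct estimate. Since $\si_2(\hi)$ is a Hilbert space, it suffices to show that the orthogonal complement of $i_\rho(\bi(\hi))$ is trivial. So suppose $y\in\si_2(\hi)$ satisfies $\langle i_\rho(x),y\rangle_{\si_2(\hi)}=0$ for all $x\in\bi(\hi)$. Expanding the inner product as a trace and using cyclicity of the trace together with the self-adjointness of $\rho^{1/4}$, this rewrites as $\Tr(x^*\rho^{1/4}y\rho^{1/4})=\Tr(x^*\,i_\rho(y))=0$ for every $x\in\bi(\hi)$. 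Choosing $x=i_\rho(y)$, which lies in $\si_1(\hi)\subseteq\bi(\hi)$ by part (c), forces $\|i_\rho(y)\|_2^2=0$, hence $i_\rho(y)=0$, and then the injectivity established in (a) gives $y=0$. Thus $i_\rho(\bi(\hi))$ is dense in $\si_2(\hi)$, completing the argument.
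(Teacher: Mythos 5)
Your proof is correct, and for the most part it is the paper's own proof: (a) and (b) are argued identically from Remark~\ref{rmk1}, (c) is the same three-factor H\"older estimate with exponents $(4,2,4)$, and the density half of (d) is the same orthogonal-complement argument (the paper concludes $\rho^{1/4}y\rho^{1/4}=0$ and then reuses injectivity and dense range of $\rho^{1/4}$ rather than citing (a), but the reasoning is identical). The one step where you genuinely diverge is the norm bound in (d): you apply H\"older directly with exponents $(4,\infty,4)$ to get $\|i_\rho(x)\|_2\le(\Tr\rho)^{1/2}\|x\|_\infty$, whereas the paper writes $\|i_\rho(x)\|_2=\sup_{\|y\|_2\le1}|\Tr(\rho^{1/4}y\rho^{1/4}x^*)|$ and bounds this by $\|i_\rho|_{\si_2(\hi)}:\si_2(\hi)\to\si_1(\hi)\|\,\|x\|_\infty$, i.e.\ it deduces the contraction in (d) from (c) by duality against the unit ball of $\si_2(\hi)$. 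Your route is shorter and self-contained; the paper's makes the logical dependence of (d) on (c) explicit. Finally, your normalization caveat is well taken and in fact identifies a point where the paper is imprecise: the sharp constant in both (c) and (d) is $(\Tr\rho)^{1/2}$ (it is attained at $y=\rho^{1/2}/\|\rho^{1/2}\|_2$ and at $x=I$, respectively), so ``contraction'' genuinely requires $\Tr\rho\le1$, while the proposition assumes only that $\omega_\rho$ is a faithful positive functional. The paper's own proof of (c) silently uses this normalization --- its closing inequality $\|\rho\|_1^{1/4}\|y\|_2\|\rho\|_1^{1/4}\le\|\rho\|_1^{1/4}$ holds only when $\|\rho\|_1\le1$ --- and the hypothesis is harmless in all later uses, since Theorem~\ref{proposition3} and Theorem~\ref{theorem4} assume $\omega_\rho$ is a faithful \emph{state}; still, stating it explicitly, as you propose, is the more careful formulation.
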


\begin{proof}
	To prove (a), let $x \in \mathcal{B}(\hi)$ and suppose $i_\rho (x)=0$.  By 
	Remark~\ref{rmk1} 
	we have that $\rho^{1/4}$ is injective.
	Therefore, since  $\rho^{1/4}x\rho^{1/4}=0$, 
	we obtain that $x\rho^{1/4}=0$.  Further, since $\rho^{1/4}$ has dense range, 
	(by Remark~\ref{rmk1} again), 
	we obtain that $x=0$.  Thus $i_\rho$ is injective.
	
	To prove (b), let $x \in \mathcal{B}(\hi)$ where $x \geq 0$.  Let $h \in \hi$.  Then
	\[
	\la h,i_\rho (x)h \ra = \la h,\rho^{1/4}x \rho^{1/4}h \ra = \la  \rho^{1/4}h,x\rho^{1/4}h \ra
	\geq 0
	\]
	since $x \geq 0$. Thus $i_\rho$ maps positive operators to positive operators.  
	
	To prove (c), first note that for $p,q,r \geq 1$ with $\frac{1}{p}+\frac{1}{q} + \frac{1}{r}=1$ and for 
	$x \in \mathcal{S}_p(\hi)$, $y \in \mathcal{S}_q(\hi)$, and $z \in \mathcal{S}_r(\hi)$, two applications of 
	Holder's inequality give that $\| xyz \|_1 \leq \|x \|_p \| y \|_q \| z \|_r$.  From this we obtain that for $y \in \si_2 (\hi )$
	with $\| y \|_2 \leq 1$ we have
	\[
	\| i_\rho (y ) \|_1 = \| \rho^{1/4} y \rho^{1/4} \|_1 \leq \| \rho^{1/4} \|_4  \| y \|_2 \| \rho^{1/4} \|_4= 
	\| \rho \|_1^{1/4} \| y \|_2 \| \rho \|_1^{1/4}  \leq  \| \rho \|_1^{1/4}.\]
	
	To prove (d), first notice that $i_\rho (x) \in \si_2 (\hi)$ for all 
	$x \in \mathcal{B}(\hi)$ since
	\begin{align*}
	\|i_\rho (x) \|_2^2 &= \| \rho^{1/4} x \rho^{1/4} \|_2^2 
	=  \Tr\left( \rho^{1/2} x^* \rho^{1/2}x \right) 
	\\ &\leq \| \rho^{1/2} x^* \rho^{1/2}x \|_1 
	\leq \| \rho^{1/2} x^* \|_2 \| \rho^{1/2}x \|_2 
	\\ &= \Tr\left( (\rho^{1/2}x^*)^*( \rho^{1/2}x^*)\right)^{1/2}
	\Tr\left( (\rho^{1/2}x)^*(\rho^{1/2}x)\right)^{1/2}
	\\&= \Tr\left( x\rho x^*\right)^{1/2}\Tr\left( x^* \rho x \right)^{1/2}< \infty .
	\end{align*}
	
	Let $y \in \mathcal{S}_2(\hi)$ such that $y \perp i_\rho (x)$ for all $x \in \mathcal{B}(\hi)$,
	(where the orthogonality 
	is taken with respect to the Hilbert-Schmidt inner product).  Then, for all 
	$x \in \mathcal{B}(\hi)$, we have
	\begin{align*}
	0 &= \la i_\rho (x),y \ra_{\si_2(\hi )}= \la \rho^{1/4}x \rho^{1/4}, y \ra_{\si_2(\hi )} \\ &= 
	\Tr(\rho^{1/4}x^* \rho^{1/4}y )= 
	\la x , \rho^{1/4} y \rho^{1/4} \ra_{\si_2(\hi )}.
	\end{align*}
	Therefore $\rho^{1/4}y\rho^{1/4}=0$.  Since $\rho^{1/4}$ is injective, we have that 
	$y \rho^{1/4}=0$ and, since 
	$\rho^{1/4}$ has dense range, we have that $y=0$.  Therefore $i_\rho$ has dense range.
	
	To see that $\| i_\rho : \bi (\hi ) \to \si_2 (\hi ) \| \leq 1$, let $x \in \mathcal{B}(\hi)$ and 
	notice that
	\begin{align*}
	\| i_\rho (x) \|_2 & = \sup_{\substack{y \in \mathcal{S}_2(\hi) \\ \| y \|_2 \leq 1}} 
	| \la i_\rho (x) , y \ra_{\si_2(\hi )} |=
	\sup_{\substack{y \in \mathcal{S}_2(\hi) \\ \| y \|_2 \leq 1}} 
	| \Tr(i_\rho (x)^* y)| \\ 
	&=\sup_{\substack{y \in \mathcal{S}_2(\hi) \\ \| y \|_2 \leq 1}} 
	| \Tr(\rho^{1/4}y \rho^{1/4}x^*)| \leq 
	\sup_{\substack{y \in \mathcal{S}_2(\hi) \\ \| y \|_2 \leq 1}} 
	\| \rho^{1/4} y \rho^{1/4} \|_1 \| x \|_{\infty} \\
	& =  \| i_\rho |_{\si_2 (\hi )}:\si_2 (\hi ) \to \si_1 (\hi ) \| \| x \|_{\infty} \leq \| x \|_{\infty} ,
	\end{align*}
	where we used part (c) for the last inequality.
\qed\end{proof}

\begin{definition}
	Let  $\hi$ be a Hilbert space and  $\rho \in \si_1 (\hi )$ be a positive operator. 
	If $T: \mathcal{B}(\hi) \rightarrow \mathcal{B}(\hi)$ is a 
	bounded linear operator, we define the operator 
	$\widetilde{T}: i_\rho ( \mathcal{B}(\hi)) \rightarrow i_\rho (\mathcal{B}(\hi))$ by 
	\[
	\widetilde{T}( \rho^{1/4}x \rho^{1/4})= \rho^{1/4}T(x) \rho^{1/4} \quad \textrm{for all} \quad 
	x \in \mathcal{B}(\hi).
	\]
\end{definition}	
Note that $\widetilde{T}$ depends on $\rho$ but, for simplicity, we chose notation 
which does not reflect this dependence.

The following theorem was first proven in \cite{cf}.  For the convenience of the reader 
we provide a proof of it here.

\begin{theorem}\label{proposition3}
	Suppose $\hi$ is a Hilbert space and $\rho \in \si_1 (\hi )$ be a positive operator such that
	$\omega_\rho$ is a faithful positive functional on $\bi (\hi )$.
	Let $T: \mathcal{B}(\hi) \rightarrow \mathcal{B}(\hi)$ 
	be a bounded linear operator which is a  Schwarz map such that $\omega_\rho$ is a
	subinvariant functional for $T$. Then the corresponding operator $\widetilde{T}$ 
	can be extended to all of 
	$\mathcal{S}_2(\hi)$ as a contraction from $\si_2 (\hi )$ to $\si_2 (\hi )$.
\end{theorem}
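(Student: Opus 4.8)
The plan is to use the two structural facts about $i_\rho$ established in Proposition~\ref{proposition2}: since $i_\rho$ is injective (part (a)), the operator $\widetilde T$ is well defined on $i_\rho(\bi(\hi))$, and since $i_\rho(\bi(\hi))$ is dense in $\si_2(\hi)$ (part (d)), it suffices by the bounded-linear-transformation theorem to prove that $\widetilde T$ is contractive on this dense subspace, i.e. that
\[
\| \rho^{1/4} T(x) \rho^{1/4} \|_2 \le \| \rho^{1/4} x \rho^{1/4} \|_2 \qquad \text{for all } x \in \bi(\hi),
\]
which, after expanding the Hilbert--Schmidt norms and using cyclicity of the trace, is equivalent to $\Tr(\rho^{1/2} T(x)^* \rho^{1/2} T(x)) \le \Tr(\rho^{1/2} x^* \rho^{1/2} x)$.

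First I would extract the two ``one-sided'' estimates that the Schwarz property feeds us. Applying \eqref{E:SI} directly gives $T(x)^* T(x) \le T(x^*x)$, while applying it to $x^*$ and using $T(x^*) = T(x)^*$ (valid since a Schwarz map is positive) gives $T(x) T(x)^* \le T(xx^*)$. Pairing each with the positive functional $\omega_\rho$ and then using subinvariance on the positive elements $x^*x$ and $xx^*$ yields
\[
\Tr(\rho\, T(x)^* T(x)) \le \Tr(\rho\, x^*x) \quad\text{and}\quad \Tr(\rho\, T(x) T(x)^*) \le \Tr(\rho\, x x^*).
\]
These are exactly the contractivity of $\widetilde T$ for the ``left'' embedding $x \mapsto \rho^{1/2} x$ and the ``right'' embedding $x \mapsto x \rho^{1/2}$. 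The difficulty is that the target inequality places $\rho^{1/2}$ symmetrically in the middle, and one checks that combining the two displayed bounds by Cauchy--Schwarz or the arithmetic--geometric mean inequality overshoots the target (it produces the larger quantity $\tfrac12(\Tr(\rho x^*x) + \Tr(\rho xx^*))$). So I expect the genuine content of the theorem to be the passage from the two one-sided bounds to the symmetric one.

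To make that passage I would interpolate. On the strip $0 \le \re z \le 1$ consider the analytic family
\[
S_z(a) = \rho^{z/2}\, T\!\left( \rho^{-z/2}\, a\, \rho^{-(1-z)/2} \right) \rho^{(1-z)/2},
\]
so that $S_{1/2}(\rho^{1/4} x \rho^{1/4}) = \rho^{1/4} T(x) \rho^{1/4} = \widetilde T(\rho^{1/4} x \rho^{1/4})$. On the left edge $z = iy$ the factors $\rho^{\pm iy/2}$ are unitary, so $\| S_{iy}(a) \|_2 = \| T(b) \rho^{1/2} \|_2$ with $b = \rho^{-iy/2} a \rho^{-(1-iy)/2}$, and the first displayed estimate gives $\| S_{iy}(a) \|_2 \le \| a \|_2$; on the right edge $z = 1+iy$ the same reduction together with the second displayed estimate gives $\| S_{1+iy}(a) \|_2 \le \| a \|_2$. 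Thus each boundary contraction uses exactly one of the two one-sided Schwarz estimates. Applying the Hadamard three-lines theorem to $z \mapsto \la S_z(a), \eta \ra_{\si_2(\hi)}$ and taking the supremum over unit vectors $\eta$ then yields $\| S_{1/2}(a) \|_2 \le \| a \|_2$, which is the desired contraction.

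The main obstacle I anticipate is purely technical: when $\rho$ is not invertible the operators $\rho^{-z/2}$ are unbounded, so $S_z$ and the analyticity of $z \mapsto \la S_z(a), \eta \ra$ are not literally defined on all of $\si_2(\hi)$. I would circumvent this by first running the whole interpolation argument on the dense subspace $\rho^{1/2} \bi(\hi) \rho^{1/2} \subseteq i_\rho(\bi(\hi))$, on which $\rho^{-z/2} a \rho^{-(1-z)/2}$ is a bounded, entire function of $z$ throughout the strip, so that the three-lines theorem applies rigorously; the resulting contraction bound then extends from this dense subspace to all of $\si_2(\hi)$ by continuity, producing the required contractive extension of $\widetilde T$.
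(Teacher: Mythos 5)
Your proposal is essentially correct, but it takes a genuinely different route from the paper. The paper's proof is operator-theoretic: it introduces the right-embedding map $\widehat{T}(x\rho^{1/2})=T(x)\rho^{1/2}$ (a contraction by exactly your first one-sided estimate), the truncated maps $\Delta_n(x)=P_n\rho^{1/2}x\rho^{-1/2}P_n$ built from spectral projections of $\rho$, and the modular-type map $\Delta(x\rho^{1/2})=\rho^{1/2}x$, and it passes from the one-sided bounds to the symmetric one via the operator inequality $\widehat{T}^*\Delta_n\widehat{T}\leq(\widehat{T}^*\Delta_n^2\widehat{T})^{1/2}$ of \cite[Lemma~1.2]{op} together with $\widehat{T}^*\Delta_n^2\widehat{T}\leq\Delta^2$ --- the latter being where your second one-sided estimate (Schwarz applied to $yy^*$ plus subinvariance) enters. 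You replace that mechanism by Stein--Kosaki-type complex interpolation between the left embedding $x\mapsto\rho^{1/2}x$ and the right embedding $x\mapsto x\rho^{1/2}$; your two edge estimates are correct and are precisely the two inequalities the paper extracts from the Schwarz property and subinvariance, and your observation that a naive Cauchy--Schwarz/AM--GM combination overshoots correctly identifies where the real content lies. In effect, the paper's operator inequality (operator monotonicity of the square root) plays the role that the three-lines theorem plays for you. Note also that both arguments share the same final structure: the contraction bound is proved only on a proper dense subspace of $i_\rho(\bi(\hi))$ (yours is $\rho^{1/2}\bi(\hi)\rho^{1/2}$, the paper's is $i_\rho(\bi(\hi)\rho^{1/2})=\rho^{1/4}\bi(\hi)\rho^{3/4}$) and then extended to $\si_2(\hi)$ by density.

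One step of your argument does need repair, though it can be repaired. On your subspace, with $a=\rho^{1/2}x'\rho^{1/2}$, the inner function is $\rho^{-z/2}a\rho^{-(1-z)/2}=\rho^{(1-z)/2}x'\rho^{z/2}$, and this is \emph{not} ``a bounded, entire function of $z$ throughout the strip'': it is operator-norm analytic only in the open strip (outside the closed strip one exponent has negative real part and the expression is unbounded), and on the boundary lines the families $z\mapsto\rho^{z/2}$ are in general only strongly continuous --- e.g. $\|\rho^{\epsilon}-I\|_\infty\not\to0$ as $\epsilon\downarrow0$ when $\rho$ is not invertible. This matters because $T$ is only norm-to-norm continuous, so continuity of $z\mapsto T(\rho^{(1-z)/2}x'\rho^{z/2})$, hence of your scalar function, up to the closed strip does not come for free, and the three-lines theorem needs boundary continuity (or at least boundary $\limsup$ control). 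The gap is fillable: since $\rho$ is compact, strong convergence upgrades to norm convergence after composition with compact factors such as $\rho^{1/2}$, and a short computation along these lines shows $z\mapsto\rho^{(1-z)/2}x'\rho^{z/2}$ is in fact operator-norm continuous on the closed strip; pairing $S_z(a)$ against $\eta$ taken from the dense subspace $\bigcup_m P_m\si_2(\hi)P_m$ (with $P_m$ the spectral projections of $\rho$), so that $\rho^{(1-z)/2}\eta^*\rho^{z/2}$ is an entire $\si_1$-valued family, then makes the three-lines application rigorous. With that verification added, your interpolation proof is a valid alternative to the paper's.
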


\begin{proof}
	Since $\omega_\rho$ is a faithful normal functional on $\bi (\hi )$, we have that
	$\hi$ must be separable (see the comment above Proposition~\ref{proposition1}), so
	let $(e_k)_{k \geq 0}$ be an orthonormal basis for $\hi$ which diagonalizes $\rho$ and let 
	$P_n= \sum_{k=0}^n | e_k \ra \la e_k|$.  Note that $\rho$ and its positive powers commute
	with each $P_n$.  
	Define the linear subspace $\mathcal{A}= \{ x \rho^{1/2} : x \in \mathcal{B}(\hi) \}$ and 
	the map 
	$\widehat{T}:\mathcal{A} \rightarrow \mathcal{A}$ by 
	$\widehat{T}(x \rho^{1/2}) = T(x) \rho^{1/2}$.  
	Further, for $n \in \N$, define the map 
	$\Delta_n: \mathcal{S}_2(\hi ) \rightarrow \mathcal{S}_2(\hi)$ by
	\[
	\Delta_n(x)= P_n \rho^{1/2} x \rho^{-1/2}P_n \quad \textrm{ for all } \quad 
	x \in \mathcal{S}_2(\hi) 
	\]
	(note that $\rho^{1/2}$ is not invertible but $\rho^{-1/2}P_n$ is a bounded operator).  
	Then, for any $x\in \mathcal{B}(\hi)$, we have 
	\begin{align*}
	\| \widetilde{T}(i_\rho (x)) \|_2^2 & = \| \rho^{1/4}T(x)\rho^{1/4} \|_2^2 = 
	\Tr\left(\rho^{1/4}T(x )^* \rho^{1/2} T(x ) \rho^{1/4}\right)\\ 
	& = \lim_{n \rightarrow \infty}
	\Tr \left( \rho^{1/2}T(x)^*P_n \rho^{1/2} T(x) \rho^{1/2}\rho^{-1/2}P_n \right)
	\\& = \lim_{n \rightarrow \infty} 
	\left\la T(x )\rho^{1/2}, \Delta_n(T(x ) \rho^{1/2}) \right\ra_{\si_2(\hi )}\\ 
	& = \lim_{n \rightarrow \infty} 
	\left\la  \widehat{T}(x \rho^{1/2}), \Delta_n\widehat{T}(x \rho^{1/2}) \right\ra_{\si_2(\hi )}
	\\ &= \lim_{n \rightarrow \infty}
	\left\la x \rho^{1/2},\widehat{T}^* \Delta_n \widehat{T}( x \rho^{1/2}) \right\ra_{\si_2(\hi )} \stepcounter{equation}\tag{\theequation}\label{equationtheorem4}
	\end{align*}
	where we will see later on why $\widehat{T}^*$ is well-defined.
	
	Define $\Delta : \mathcal{A} \rightarrow \mathcal{A}$ by 
	$\Delta (x \rho^{1/2})=\rho^{1/2} x$, which is well-defined since 
	$\rho^{1/2}$ has dense range (hence, for $x,y \in \mathcal{B}(\hi)$, 
	$x\rho^{1/2}=y\rho^{1/2}$ implies $x=y$).  
	Let $\mathcal{B}= \{ x \rho : x \in \mathcal{B}(\hi) \}$.  We make the following three claims:
	\begin{itemize}
		\item[(i)] $\widehat{T}$ is a contraction on $\mathcal{A}$.
		Therefore $\widehat{T}$ can be extended to a contraction on $\mathcal{S}_2(\hi)$ since 
		$\mathcal{A}$ is dense in $\mathcal{S}_2(\hi)$). 
		\item[(ii)] $\Delta_n^2$ is positive.  Therefore, by  \cite[Lemma~1.2]{op}, 
		we have
		\begin{equation}\label{equation4.1}
		\widehat{T}^*\Delta_n \widehat{T} \leq \left( \widehat{T}^* \Delta_n^2 \widehat{T} \right)^{1/2}.
		\end{equation}
		\item[(iii)]  $\widehat{T}^* \Delta_n^2 \widehat{T} \leq \Delta^2$ on $\mathcal{B}$.  
		Thus,
		\begin{equation}\label{equation4.2}
		\left( \widehat{T}^* \Delta_n^2 \widehat{T} \right)^{1/2} \leq (\Delta^2)^{1/2}=\Delta .
		\end{equation}
		Hence, by combining \eqref{equation4.1} and \eqref{equation4.2}, we obtain 
		$\widehat{T}^* \Delta_n \widehat{T} \leq \Delta$ on $\mathcal{B}$.
	\end{itemize}
	Assume for the moment that the above claims (i), (ii), and (iii) are true. By replacing 
	$x$ by $x\rho^{1/2}$, in Equation \eqref{equationtheorem4}
	we obtain that
	\begin{align*}
	\| \widetilde{T}(i_\rho (x\rho^{1/2})) \|_2^2 
	& = \lim_{n \rightarrow \infty} 
	\left\la x \rho, \widehat{T}^* \Delta_n \widehat{T}(x \rho) \right\ra_{\si_2(\hi )} 
	\\ &\leq \left\la x\rho, \Delta (x\rho) \right\ra_{\si_2(\hi )}  
	= \left\la x\rho, \rho^{1/2} x \rho^{1/2} \right\ra_{\si_2(\hi )} \\ 
	&= \Tr\left( \rho x^* \rho^{1/2}x \rho^{1/2} \right)  
	= \Tr \left( \rho^{3/4}x^* \rho^{1/4}\rho^{1/4}x \rho^{3/4} \right) \\
	& = \left\la \rho^{1/4} x \rho^{3/4}, \rho^{1/4} x \rho^{3/4} \right\ra_{\si_2(\hi )}
	= \| i_\rho (x\rho^{1/2}) \|_2^2
	\end{align*}
	and so $\widetilde{T}$ is a contraction on $i_\rho (\mathcal{B}(\hi)\rho^{1/2} )$.  
	We now show that 
	$i_\rho (\mathcal{B}(\hi) \rho^{1/2})$ is dense in $\mathcal{S}_2(\hi)$.  Let 
	$y \in \mathcal{S}_2(\hi)$ such that 
	$y \perp i_\rho (\mathcal{B}(\hi)\rho^{1/2})$.  Then, for any $x \in \mathcal{B}(\hi)$ we have that
	\begin{align*}
	0 &= \la i_\rho (x \rho^{1/2}), y \ra_{\si_2(\hi )}=
	\Tr(i_\rho (x\rho^{1/2})^*y ) \\&=
	\Tr(\rho^{1/4}\rho^{1/2}x^* \rho^{1/4}y)
	=\la x, \rho^{1/4}y \rho^{3/4} \ra_{\si_2(\hi )}
	\end{align*}
	and hence $\rho^{1/4}y\rho^{3/4}=0$.  Since $\rho^{1/4}$ is injective, we then have that 
	$y \rho^{3/4}=0$ and, since 
	$\rho^{3/4}$ has dense range, we obtain that $y=0$.  Therefore 
	$i_\rho (\mathcal{B}(\hi) \rho^{1/2})$ is dense in 
	$\mathcal{S}_2(\hi)$.  Since $\widetilde{T}$ is a contraction on 
	$i_\rho (\mathcal{B}(\hi) \rho^{1/2})$, we can extend it to a 
	contraction on $\mathcal{S}_2(\hi)$.  This finishes the proof of the theorem pending verification of claims (i), (ii), and (iii), as well as the fact that $\widehat{T}^*$ is 
	well-defined.
	
	First, we prove claim (i), i.e., that $\widehat{T}$ is a contraction on $\mathcal{A}$.  Let $x \in \mathcal{B}(\hi)$.  
	Then 
	\begin{align*}
	\| \widehat{T}(x \rho^{1/2}) \|_2^2 &= \| T(x) \rho^{1/2} \|_2^2 = 
	\la T(x) \rho^{1/2} , T(x) \rho^{1/2} \ra_{\si_2(\hi )} \\ &= 
	\Tr( \rho^{1/2} T(x)^*T(x) \rho^{1/2} )  \leq \Tr( \rho^{1/2} T(x^*x) \rho^{1/2} )
	\end{align*}
	since $T$ is a Schwarz map.  Further,
	\begin{align*}
	\Tr( \rho^{1/2} T(x^*x) \rho^{1/2} ) &= \Tr( \rho T(x^*x)) \leq \Tr( \rho x^*x) = 
	\Tr( \rho^{1/2} x^*x \rho^{1/2} ) \\
	&= \la x\rho^{1/2} , x \rho^{1/2} \ra_{\si_2(\hi )} = \| x \rho^{1/2} \|_2^2.
	\end{align*}
	Therefore $\| \widehat{T}(x \rho^{1/2}) \|_2^2 \leq \| x \rho^{1/2} \|_2^2$, and so 
	$\widehat{T}$ is a contraction on 
	$\mathcal{A}$.  Hence, $\widehat{T}$ can be extended to a contraction on 
	$\mathcal{S}_2(\hi)$ since 
	$\mathcal{A}$ is dense in $\mathcal{S}_2(\hi)$ (this also shows that 
	$\widehat{T}^*$ is well-defined).
	
	For claim (ii), i.e., that $\Delta_n^2$ is positive, first note that since $\rho$ commutes 
	with $P_n$ we have
	\[
	\Delta_n^2x= P_n \rho x P_n \rho^{-1}P_n \quad \textrm{ for all } \quad 
	x \in \mathcal{S}_2(\hi) 
	\]
	(note that $\rho$ is not invertible but $\rho^{-1}P_n$ is a bounded operator).  Indeed, if 
	$x \in \mathcal{S}_2(\hi)$ then
	\begin{align*}
	\left\la  x , \Delta_n^2x \right\ra_{\si_2(\hi )} & = 
	\left\la x, P_n \rho x P_n \rho^{-1}P_n \right\ra_{\si_2(\hi )}
	= \Tr\left( x^* P_n \rho x P_n \rho^{-1} P_n  \right) \\ 
	& = \Tr\left( \rho^{1/2} P_n x P_n\rho^{-1/2}P_nP_n \rho^{-1/2}P_nx^* P_n \rho^{1/2}  \right)\\ 
	& = \Tr\left( ( \rho^{1/2} P_nxP_n\rho^{-1/2}P_n)( \rho^{1/2} P_nxP_n\rho^{-1/2}P_n)^* \right) \geq 0,
	\end{align*}
	and so $\Delta_n^2$ is positive.  By \cite[Lemma 1.2]{op}, we then have that 
	\begin{equation}\label{equation2}
	\widehat{T}^*\Delta_n \widehat{T} \leq (\widehat{T}^* \Delta_n^2 \widehat{T})^{1/2}.
	\end{equation}
	
	It is left to prove claim (iii), i.e., that $\widehat{T}^* \Delta_n^2 \widehat{T} \leq \Delta^2$
	on $\mathcal{B}$.  Indeed,
	\begin{align} \label{E:seebelow}
	\left\la x\rho, \widehat{T}^* \Delta_n^2\widehat{T}(x \rho) \right\ra_{\si_2(\hi )} 
	& = 
	\left\la 
	T(x \rho^{1/2})\rho^{1/2}, \Delta_n^2T(x \rho^{1/2})\rho^{1/2}  
	\right\ra_{\si_2(\hi )} \nonumber \\ 
	& = \left\la  
	T(x \rho^{1/2})\rho^{1/2}, P_n \rho T(x \rho^{1/2})\rho^{1/2} P_n \rho^{-1}P_n 
	\right\ra_{\si_2(\hi )} \nonumber \\ 
	& =\Tr\left(
	\rho^{1/2} T(x\rho^{1/2})^* P_n \rho T(x \rho^{1/2}) \rho^{1/2}P_n \rho^{-1}P_n 
	\right) \nonumber \\
	& = \Tr\left( 
	\rho T(x \rho^{1/2})P_n T(x \rho^{1/2})^*P_n
	\right) \nonumber \\ 
	& \leq \Tr\left(\rho T(x \rho^{1/2})T(x \rho^{1/2})^*\right) \quad \textrm{(see below)}\\ 
	& \leq \Tr\left( \rho T\left((x \rho^{1/2})(x \rho^{1/2})^*\right) \right) \quad (T \textrm{ is a Schwarz map)}\nonumber \\
	& \leq  \Tr\left( \rho (x \rho^{1/2})(x \rho^{1/2})^*\right) \quad (\omega_\rho \textrm{ is subinvariant for } T) \nonumber\\ 
	&= \Tr\left(\rho x\rho x^*\right) =\Tr\left(\rho x(x\rho)^*\right) = \Tr\left((x\rho)^*\rho x\right)  \nonumber\\
	& = \Tr\left((x \rho )^* \Delta^2(x\rho)\right)\quad 
	\textrm{(since }\Delta^2 (x\rho)= \rho x ) \nonumber
	\\ &= \left\la x \rho, \Delta^2(x \rho) \right\ra_{\si_2(\hi )} . \nonumber
	\end{align}
	This completes the proof as long as we justify the inequality (\ref{E:seebelow}).  
	Indeed, we have that the inequality $\Tr(P_nA^*P_nA) \leq \Tr(A^*A)$ holds in general for any 
	$A \in \mathcal{S}_2(\hi)$, since if $(e_k)_{k \geq 1}$ is the 
	orthonormal basis of $\hi$ used to define each $P_n$, then
	\begin{align*}
	\Tr(P_nA^*P_nA)  &= \sum_{k=1}^{\infty} \left\la e_k, P_nA^*P_n^2Ae_k \right\ra  
	\\ &= \sum_{k=1}^{\infty} \left\la P_nAP_ne_k, P_nAe_k \right\ra =\sum_{k=1}^n \left\la P_nAe_k, P_nAe_k \right\ra,
	\end{align*}
	and further
	\begin{align*}
	\sum_{k=1}^n \left\la P_nAe_k, P_nAe_k \right\ra &=\sum_{k=1}^n \| P_nAe_k \|^2 
	\leq \sum_{k=1}^n \| P_n \|^2 \|Ae_k \|^2\\ &\leq \sum_{k=1}^{\infty} \|Ae_k \|^2 =\Tr(A^*A).\end{align*}
\qed\end{proof}

\subsection{An Alternate Construction}\label{alternate}

There is another situation where a bounded operator on a $C^*$-algebra gives rise to
a corresponding operator on a Hilbert space, and we would like to mention this 
in the next remark.

\begin{remark} \label{R:kfgv}
	Let $\ai$ be a unital $C^*$-algebra and $\omega$ be a faithful state on $\ai$.
	Consider the GNS construction of $\ai$ associated with
	$\omega$. Let $\ki$ be the Hilbert space associated with the GNS construction, 
	$\pi : \ai \to \bi (\ki)$ be the $*$-representation of $\ai$ into the $C^*$-algebra of all
	bounded operators on $\ki$, and $\Omega$ denote the cyclic element of the Hilbert
	space $\ki$ for the representation $\pi$, 
	(i.e. the subspace $\{ \pi (a) ( \Omega ) : a \in \ai \}$ is norm dense in $\ki$) which is 
	equal to the unit of $\ai$ viewed as an element of $\ki$. 
	Let $T$ be a  bounded operator on $\ai$ which is  a Schwarz map. 
	Assume that $\omega$ is subinvariant for $T$.
	Define  an operator 
	$\overline{T}$ on the dense subspace $\{ \pi (a)(\Omega ) : a \in \ai \}$ of $\ki$
	with values in $\ki$ by 
	\[
	\overline{T}(\pi (a) (\Omega)) = \pi (T(a)) (\Omega ) \quad \textrm{for all }a \in \ai.
	\]
	Then $\overline{T}$ is a contraction (hence it extends to $\ki$).
\end{remark}

\begin{proof}
	Since $\omega$ is faithful, the quotient that is usually 
	associated with the 
	GNS construction does not take place, and the elements of $\ai$ belong to $\ki$.
	Let $\la \cdot , \cdot \ra_\omega$ denote the inner product in $\ki$ and 
	$\| \cdot \|_\omega$ denote the norm of $\ki$. Then since $\omega$ is faithful,
	we have that for  $a,b \in \ai$, 
	$\la a , b \ra_\omega = \omega (a^* b)$
	and hence $\| (\pi(a))(\Omega) \|_\omega^2 = \omega (a^*a)$. 
	
	For every $a \in \ai$ we have
	\begin{align*}
	\| \overline{T} (\pi (a) (\Omega)) \|_\omega ^2  = & \| \pi (T(a)) (\Omega ) \|_\omega^2
	= \omega (T(a)^* T(a))  \leq \omega (T(a^*a)) \\
	& \textrm{(since }\omega \textrm{ is positive and }T \textrm{ is a Schwarz map)}\\
	\leq &\omega (a^* a ) \quad
	\textrm{($T\geq 0$ is a Schwarz map; $\omega$ is subinvariant for $T$)}\\
	= &\| \pi (a) (\Omega) \|_\omega^2,
	\end{align*}
	which finishes the proof.
\qed\end{proof}

Notice the similarities between Theorem~\ref{proposition3} and Remark~\ref{R:kfgv}. 
Both refer to a bounded operator on some $C^*$-algebra  where a positive linear 
functional is fixed, and they each conclude the 
existence of an associated contraction on some Hilbert space. But there are three key differences
between Theorem~\ref{proposition3} and Remark~\ref{R:kfgv}. First, 
Theorem~\ref{proposition3} refers to an operator on 
$\bi (\hi )$ for some Hilbert space $\hi$ (which is necessarily separable since $\bi (\hi)$
is assumed to admit a faithful normal state), while Remark~\ref{R:kfgv} assumes that 
the operator is defined on a general $C^*$-algebra. Second, the state $\omega_\rho$
which is mentioned in Theorem~\ref{proposition3} is normal since it is defined via 
the trace-class operator $\rho$, while there is no such assumption 
in Remark~\ref{R:kfgv} (the normality of the 
state $\omega$ in Remark~\ref{R:kfgv} does not make
sense in general  since $\ai$ is simply 
assumed to be a $C^*$-algebra and not a von~Neumann algebra as it is assumed in \cite[Equation (2.1)]{kfgv}). 
Third, the Hilbert space that is used in 
Theorem~\ref{proposition3} is the space $\si_2 (\hi )$ 
which does not depend on the positive linear functional,
while the map  $i_\rho$ which maps $\bi (\hi )$ to $\si_2 (\hi )$, does depend on the 
positive linear functional.
On the other hand, 
the Hilbert space that is used in Remark~\ref{R:kfgv}  
(i.e. the GNS construction
associated to the 
faithful state $\omega$ of the $C^*$-algebra $\ai$)
depends on the state, while the $*$-representation $\pi$ of the von~Neumann algebra 
which is associated with the GNS construction does not depend on the state.   
Notice also that the combinations of the Hilbert spaces with the representations  in 
Theorem~\ref{proposition3} and Remark~\ref{R:kfgv} are very similar. 
More precisely, for $a, b \in \bi (\hi )$ we have that 
$i_\rho (a), i_\rho (b) \in \si_2 (\hi)$ hence 
\begin{align*}
\la i_\rho (a), i_\rho (b) \ra_{\si_2 (\hi ) } &= \Tr (i_\rho (a)^* i_\rho (b) ) \\&= 
\Tr (\rho^{1/4} a^* \rho^{1/4} \rho^{1/4} b \rho^{1/4} ) =
\Tr (a^* \rho^{1/2} b \rho^{1/2}) . 
\end{align*}
On the other hand, if we assume for the moment that 
the $C^*$-algebra  $\ai$ that appears in Remark~\ref{R:kfgv} is equal to $\bi (\hi )$ for
some Hilbert space $\hi$, and 
the faithful state $\omega$ on the $C^*$-algebra 
$\ai$ is given by $\omega (a) = \Tr (\rho a)$ for some positive trace-class operator 
$\rho$ on $\hi$, then the inner product of two elements
$a,b \in \ai$ via the GNS construction is given by
\[
\la a, b \ra_{\omega} =\omega (a^*b) = \Tr (\rho a^* b) .
\]
Thus the  combination of the inner product with the representation that is used 
in Theorem~\ref{proposition3} is slightly more ``symmetric" than the combination of the
inner product with the representation that is used in Remark~\ref{R:kfgv}. 
The reader of course will notice the difference between the complexity of the 
proof of Theorem~\ref{proposition3}  and that of  Remark~\ref{R:kfgv}. 
The extra intricacies in the proof of Theorem~\ref{proposition3} is the price we pay 
in order to achieve the extra symmetry 
in the combination of the inner product and the representation as discussed above.

\begin{remark}
	The assumption that ``$\omega$ is subinvariant for $T$" cannot be 
	omitted in Remark~\ref{R:kfgv}.
\end{remark}

An example where $\omega$ is not a subinvariant functional for $T$ but all the other assumptions of Remark~\ref{R:kfgv} are valid is presented in Remark 11.7 of \cite{WDiss}. 

\begin{remark}
	Note that if $\hi$ is a Hilbert space, $T:\bi (\hi) \to \bi (\hi)$ is a bounded positive linear
	operator, and $\omega$ is a subinvariant positive faithful functional for $T$,
	then  $\omega/\omega(1)$ is a subinvariant faithful state for $T$ (here $1$ denotes the 
	identity operator on $\hi$). Thus, instead of assuming the existence of 
	subinvariant positive faithful functionals, we henceforth simply assume the existence 
	of subinvariant faithful states. Our subsequent results thus
	remain valid if 
	the assumptions of the existence of subinvariant faithful states are replaced by the assumptions of the existence of subinvariant positive faithful functionals.
\end{remark}

\section{Semigroups of Schwarz Maps}\label{schwarz}

We first recall some basic definitions about semigroups.

\begin{definition} \label{Def:semigroupcontinuity}
	Let $X$ be a Banach space.  A one-parameter family $(T_t)_{t\geq 0}$ of bounded operators on $X$ is a 
	\textbf{semigroup} on $X$ if  $T_{t+s}=T_tT_s$ for all $t,s \geq 0$, and
	$T_0=I$ where $I$ is the identity operator on $X$.
	We say the semigroup $(T_t)_{t \geq 0}$ on a Banach space $X$  is
	\begin{itemize}
		
		\item \textbf{uniformly continuous} if the map $t \mapsto T_t$ is continuous with 
		respect to the operator norm. 
		
		\item \textbf{strongly continuous} if for all 
		$x \in X$ the map $t \mapsto T_tx$ is continuous with respect to the norm on $X$.
		
		\item \textbf{weakly continuous} if for all $x \in X$ and all $x^* \in X^*$ the map 
		$t \mapsto x^*(T_tx)$ is continuous.  
		
		\item \textbf{weak}$\mathbf{^*}$~\textbf{continuous} if $X$ is a dual Banach space $X= Y^*$ and for all 
		$y \in Y$ and $x \in X$ the map $t \mapsto (T_t(x))(y)$ is continuous.
		
	\end{itemize}
	If $\hi$ is a Hilbert space and $X=\bi (\hi)$ then the semigroup 
	$(T_t)_{t \geq 0}$ on the Banach space $X$  is  \textbf{WOT continuous} (where this acronym stands 
	as usually for the weak operator topology) if for all $h_1,h_2 \in \hi$
	and $x \in \bi (\hi )$ we have that 
	the map $t \mapsto \la h_1, T_t(x)h_2 \ra $ is continuous.
\end{definition}

It can be shown that a semigroup on a Banach space is strongly continuous 
if and only if it is weakly continuous (see \cite[Thm. 3.31]{attal}). If $(T_t)_{t \geq 0}$ is a uniformly
continuous semigroup on a Banach space $X$ then its \textbf{generator} is defined as 
the operator norm limit
\[
L= \lim_{t \to 0} \frac{T_t - I}{t}.
\]
This limit exists and it defines a bounded operator on $X$. If we do not
assume the uniform continuity of the semigroup, then the definition of the generator is
given next:

\begin{definition}
	Let $(T_t)_{t \geq 0}$ be a strongly continuous semigroup (resp. weakly continuous, resp. weak$^*$ continuous), 
	on a Banach space $X$ (of course, when we assume that the semigroup is 
	weak$^*$ continuous we assume that 
	$X$ is a dual Banach space).  We say an element $x \in X$ belongs to the 
	\textbf{domain} $D(L)$ of the generator $L$ of $(T_t)_{t \geq 0}$,  if
	\begin{equation}\label{defgen1}
	\lim_{t \rightarrow 0} \frac{T_t(x) -x}{t} 
	\end{equation}
	converges in norm (resp. weakly, resp. weak$^*$) and, in this case, define the 
	\textbf{generator} to be the generally unbounded operator $L$ such that
	\begin{equation}\label{defgen2}
	L(x)=\lim_{t \rightarrow 0} \frac{T_t(x) -x}{t}  \quad \textrm{for all } x \in D(L)
	\end{equation}
	where the last limit is taken in the norm (resp. weak, resp. weak$^*$) topology of $X$. 
\end{definition}

Since a semigroup on a Banach space is strongly continuous if and only if it is weakly continuous, it is natural to 
ask whether the limits \eqref{defgen1} and \eqref{defgen2} can be replaced by weak limits and end up with the same 
$D(L)$ and $L$.  It turns out that this is indeed the case (see \cite[Proposition 3.36]{attal}).  We will make use of this fact in the proof of Theorem~\ref{theorem4}.

\subsection{The Extended Generator $\boldsymbol{L_{(h_n)}}$ of $\boldsymbol{(T_t)_{t\geq0}}$}\label{extending}

We now wish to extend the definition of the generator to include some cases where the limit \eqref{defgen2} does not exist. We first require the following notation:

\begin{definition}
	Let $\hi$ be a Hilbert space and $(h_n)_{n \in N}$ be an (countable or uncountable) orthonormal basis of $\hi$. We let $M_N^{(h_n)}$ denote the set of all complex $N\times N$ matrices with rows and columns indexed by $N$. We view a matrix $L\in M_N^{(h_n)}$ as a linear map $L:D(L)\to \C^N$ acting on $\hi$ as follows: denote $L=(L_{n,m})_{n,m\in\ N}$, and define $D(L)\subset \hi$ as the set of all vectors $h=\sum_{m\in N}\la h_m,h\ra h_m\in\hi$ such that the series $\sum_{m\in N}L_{n,m}\la h_m,h\ra$ converges for all $n\in N$. Then \[L(h)=\left(\sum_{m\in N}L_{n,m}\la h_m,h\ra\right)_{n\in N}.\] This is in particular the natural matrix multiplication of $L$ against $h$ written as a column vector.
\end{definition}

The following definition is given as the minimal requirements for the outputs of $L$ to be considered as linear maps in the sense given above for a fixed orthonormal basis $(h_n)_{n\in N}$ of~$\hi$.

\begin{definition} \label{Def:generatorwrtabasis}
	Let $\hi$ be a Hilbert space and $(h_n)_{n \in N}$ be a 
	(countable or uncountable) orthonormal basis of $\hi$. Let
	$(T_t)_{t \geq 0}$ be a semigroup 
	of bounded operators on $\bi (\hi )$. To define the extended generator $L_{(h_n)}$
	of $(T_t)_{t \geq 0}$ with respect to the basis $(h_n)_{n \in N}$ we first define its domain as the linear subspace of all $x \in \bi (\hi )$ such 
	that the function
	\[
	[0,\infty ) \ni t \mapsto \la h_n, T_t(x) h_m \ra 
	\]
	is  differentiable at $0$ for every $n,m \in N$; that is, $D(L_{(h_n)})$ is the linear subspace of all $x \in \bi (\hi )$ such that the limit
	\[
	\lim_{t \to 0} \la h_n, \frac{T_t(x)-x}{t}h_m \ra
	\]
	exists for every $n,m \in N$. In general $D(L_{(h_n)})$ can be 
	the zero subspace, but if the semigroup is WOT continuous then
	$D(L_{(h_n)})$ is WOT dense in $\bi (\hi)$. Define the 
	\textbf{extended generator} $\mathbf{L_{(h_n)}}$ \textbf{of} 
	$\mathbf{(T_t)_{t \geq 0}}$ \textbf{(with respect to the orthonormal basis} 
	$\mathbf{(h_n)_{n \in N}})$ to be the map with domain
	$D(L_{(h_n)})$ whose range elements are matrices $L_{(h_i)}(x)\in M_N^{(h_n)}$ with entries given by \[[L_{(h_i)}(x)]_{n,m}=\lim_{t \to 0} \la h_n, \frac{T_t(x)-x}{t}h_m \ra.\]
\end{definition}

Next we want to compare the generator of a semigroup on
$\bi (\hi )$ with respect to an orthonormal basis of $\hi$ to 
the usual generator. Since the definition of the generator depends
on the continuity of the semigroup, 
in the next remark we will consider a weak$^*$ continuous semigroup
on $\bi (\hi)$ for some Hilbert space $\hi$. 
The reason that we choose the weak$^*$ continuity versus any
other continuity assumption is because it is the weakest and the most natural
among all continuity assumptions that appear in 
Definition~\ref{Def:semigroupcontinuity}.

\begin{remark}\label{rem1}
	Let $\hi$ be a Hilbert space,
	$(T_t)_{t \geq 0}$ be a weak$^*$ continuous 
	semigroup of bounded operators on $\bi (\hi )$, and let $L$ denote its generator. Let $(h_n)_{n \in N}$ be a (countable or uncountable) 
	orthonormal basis of $\hi$, and let $L_{(h_n)}$ denote the 
	generator of $(T_t)_{t \geq 0}$ with respect to $(h_n)_{n \in N}$.
	Then $D(L) \subseteq D(L_{(h_n)})$,
	and for every $x \in D(L)$ we have	$L(x)=L_{(h_n)}(x)$, by which we mean the matrix of $L(x)$ with respect to $(h_n)_{n\in N}$ and the matrix $L_{(h_n)}(x)$ are equal.
\end{remark}

Indeed, for fixed $x \in D(L)$ and every $h,h' \in \hi$ we have that 
\begin{equation} \label{E:deriv}
\la h,  \frac{T_t(x) -x}{t}h' \ra \to \la h,  L(x)  h' \ra 
\quad \textrm{as } t \to 0.
\end{equation}
In particular,
\[
\lim_{t \to 0}\limits \la h_n , \frac{T_t(x)-x}{t}h_m \ra =\la h_n,L(x)h_m\ra
\]
for every $n,m \in N$. Thus $x \in D(L_{(h_n)})$.

\noindent \textbf{Notation:} If $N$ is a nonempty set, then we denote by $\Pi_\textrm{fin}(N)$ the set of all finite subsets of $N$.

\noindent \textbf{Notation:}
	Let $\hi$ and $\ki$ be Hilbert spaces with $\hi \subseteq \ki$ and let 
	$A \in \bi (\hi)$ and $B \in \bi (\ki)$.  We shall denote by
	\[
	A=\pr_\hi (B)
	\]
	the fact that 
	\[
	A=P_\hi B|_\hi
	\]
	where $\phantom{A}|_\hi$ denotes the restriction to $\hi$
	and $P_\hi :\ki \to \hi$ denotes the orthogonal projection from $\ki$ onto $\hi$.
	The operator $B$ is called a \textbf{dilation} of the operator $A$ and the operator 
	$A$ is called a \textbf{compression} of the operator $B$.

\begin{remark}\label{rem2}
	Let $\hi$ be a Hilbert space with (countable or uncountable) dimension $N$, $(h_n)_{n \in N}$ be an orthonormal basis of $\hi$,
	$(T_t)_{t \geq 0}$ be a 
	semigroup of bounded operators on $\bi (\hi )$, and let 
	$L_{(h_n)}$ denote its generator with respect to $(h_n)_{n \in N}$. 
	For $x \in D(L_{(h_n)})$ and $F \in \Pi_\textrm{fin}(N)$ there exists a unique operator
	\[
	L_{(h_n)}(x)_F : \Span (h_n)_{n \in F} \to \Span (h_n)_{n \in F}
	\]
	satisfying \begin{equation} \label{E:generatorwrtabasis2}
	\lim_{t\to 0}\limits \la h , \frac{T_t(x)-x}{t} h' \ra = 
	\la h , L_{(h_n)}(x)_F h' \ra \quad \textrm{for all }
	h, h' \in \Span(h_n)_{n \in F},
	\end{equation} or equivalently 
	\begin{equation} \label{E:ED}
	\left\| 
	\pr_{\Span (h_n)_{n \in F}}\left( \frac{T_t (x)-x}{t}\right) - 
	L_{(h_n)}(x)_F
	\right\|_{\bi (\Span (h_n)_{n \in F})} \to 0 \quad \textrm{as }t \to 0.
	\end{equation}
\end{remark}

Indeed, fix $F \in \Pi_\textrm{fin}(N)$. From 
Definition~\ref{Def:generatorwrtabasis}, $L_{(h_n)}(x)_F: \Span(x_n)_{n \in F} \to \Span(x_n)_{n \in F}$ is uniquely defined by
\begin{equation} \label{E:generatorwrtabasis2'}
[L_{(h_i)}(x)_F](h)=\sum_{n,m\in F}\lim_{t\to0}\la h_n,\frac{T_t(x)-x}{t}h_m\ra\la h_m,h \ra h_n\quad\textrm{if } h=\sum_{m\in F}\la h_m,h\ra h_m.
\end{equation}

Then (\ref{E:generatorwrtabasis2}) is obvious from Definition~\ref{Def:generatorwrtabasis} and \eqref{E:generatorwrtabasis2'}. The equivalence of \eqref{E:generatorwrtabasis2} and \eqref{E:ED} then follows for any finite subset $F$ of $N$, since all linear Hausdorff topologies 
on the space of linear operators on $\Span (h_n)_{n \in F}$ are equivalent. Thus the WOT on $\Span (h_n)_{n \in F}$ in 
(\ref{E:generatorwrtabasis2}) can be replaced by the
$\bi (\Span (h_n)_{n \in F})$ topology.

\begin{remark} \label{compatible}
	Let $\hi$ be a Hilbert space with (countable or uncountable) dimension $N$, $(h_n)_{n \in N}$ be an orthonormal basis of 
	$\hi$, 
	$(T_t)_{t \geq 0}$ be a 
	semigroup of bounded operators on $\bi (\hi )$, and let 
	$L_{(h_n)_n}$ denote its generator with respect to 
	$(h_n)_{n \in N}$. 
	Fix $x \in D (L_{(h_n)})$. Then the family 
	$(L_{(h_n)}(x)_F)_{F \in \Pi_\textrm{fin}(N)}$ is compatible in the following sense:
	If $G \subset F$ are two finite subsets of $N$ then 
	$\textrm{pr}_{\Span (h_n)_{n \in G}}(L_{(h_n)}(x)_F)=L_{(h_n)}(x)_G$.
\end{remark}

Indeed, this is obvious from \eqref{E:generatorwrtabasis2'}.

\begin{remark}
	The generator of a semigroup with respect to an orthonormal 
	basis that we defined above is related to the 
	\textbf{form generator} which was defined by Davies \cite{D77}
	and was further studied in \cite{holevo93}, \cite{CF93}, \cite{sinha}, \cite{holevo}, \cite{CF98}, \cite{fagnola1}, \cite{az}, and \cite{hsw}. If 
	$(T_t)_{t \geq 0}$ is a weak$^*$ continuous semigroup on 
	$\bi (\hi )$ for some Hilbert space $\hi$, then a form generator 
	is the map $\phi : \ki \times \bi (\hi ) \times \ki \to \C$
	where $\ki$ is a dense linear subspace of $\hi$, defined by
	\[
	\phi (h, x , h')= \la h, \lim_{t \to 0}\limits \frac{T_t(x)-x}{t}h' \ra 
	\quad \textrm{ for every }h,h' \in \ki \textrm{ and every }x \in \bi (\hi ) .
	\] Note that if $(h_n)_{n \in N}$ is an orthonormal basis of $\hi$
	and $\ki$ denotes the linear span of $(h_n)_{n \in N}$
	then the form generator coincides with the generator with respect
	to $(h_n)_{n \in N}$ if the domain of the generator with respect to 
	$(h_n)_{n \in N}$ is equal to $\bi (\hi )$. Here we assume that
	the domain of the generator with respect to an orthonormal basis 
	is a linear subspace of $\bi (\hi )$, not necessarily equal to 
	$\bi (\hi )$. 
\end{remark}

We require a few more definitions in order to state the next result.

\begin{definition}
	Let $\hi$ be a Hilbert space, $\omega$ be a state on $\bi (\hi )$ and $(T_t)_{t \geq 0}$
	be a semigroup of positive operators on $\bi (\hi )$. We say that 
	$\omega$ is a 
	subinvariant state for the semigroup $(T_t)_{t \geq 0}$, if and only if $\omega$ is 
	subinvariant for $T_t$ for every $t \geq 0$.
\end{definition}

\begin{definition}The \textbf{Moore-Penrose inverse} or \textbf{pseudoinverse} $x^{(-1)}$ of $x\in \bi(\hi)$ is defined as the unique linear extension of $(x|_{\Ni(x)^\perp})^{-1}$, the inverse as a function, to \[D(x^{(-1)}):=\Ri(x)+\Ri(x)^\perp\] with $\Ni(x^{(-1)})=\Ri(x)^\perp$, where $\Ni(x)$ and $\Ri(x)$ denote the nullspace and range of $x$, respectively. Letting $P$ and $Q$ denote the orthogonal projections onto $\Ni(x)$ and $\overline{\Ri(x)}$, respectively, it can be shown (see e.g. \cite{EnglHanke}) that $x^{(-1)}$ is uniquely determined by the relations \[x^{(-1)}x=I-P \quad \textrm{and}\quad xx^{(-1)}=Q|_{D(x^{(-1)})}.\] \end{definition}

\noindent \textbf{Notation:} By $i_{\rho^{(-1)}}$ we mean the map from $\bi(\hi)$ to the space of linear maps on $\hi$ defined via \[i_{\rho^{(-1)}}(x)=(\rho^{1/4})^{(-1)}x(\rho^{1/4})^{(-1)}.\]

Now we are ready to state the next result.

\begin{theorem}\label{theorem4}
	Let $\hi$ be a Hilbert space, $(T_t)_{t\geq 0}$ be a semigroup of Schwarz maps on 
	$\bi (\hi)$, and let 
	$\rho \in \mathcal{S}_1(\hi)$ be such that $\omega_{\rho}$ is a faithful  state
	on $\bi (\hi)$ which is subinvariant for the semigroup  
	$(T_t)_{t \geq 0}$.
	Then there exists a unique semigroup 
	$(\widetilde{T_t})_{t \geq 0}$ of contractions on 
	$\mathcal{S}_2(\hi)$ such that
	\begin{equation}\label{equation4}
	\widetilde{T_t}(i_\rho (x))=i_\rho (T_t(x)) \quad \textrm{ for all }  x \in \mathcal{B}(\hi).
	\end{equation}
	Moreover, if $(T_t)_{t \geq 0}$ is weak$^*$-continuous then $(\widetilde{T_t})_{t \geq 0}$ 
	is strongly continuous.  
	Let  $L$ denote the generator of $(T_t)_{t \geq 0}$, let $\widetilde{L}$ denote the generator of 
	$(\widetilde{T_t})_{t \geq 0}$, and let $L_{(h_n)}$ denote the generator of $(T_t)_{t \geq 0}$ with 
	respect to $(h_n)_{n \in \N}$, where $(h_n)_{n \in \N}$ is an orthonormal basis of $\hi$ consisting of eigenvectors of $\rho$. Then $x \in D(L)$ implies $i_\rho(x) \in D( \widetilde{L})$, and moreover \[\widetilde{L}(i_\rho(x))=i_\rho(L(x));\] conversely, 
	$i_\rho(x) \in D( \widetilde{L})$ implies $x \in D(L_{(h_n)})$,
	and moreover \begin{equation}
	\label{LtotildeL} L_{(h_n)}(x)=i_{\rho^{(-1)}}(\widetilde{L}(i_\rho(x))).\end{equation}
\end{theorem}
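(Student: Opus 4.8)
The plan is to build $(\widetilde{T_t})_{t\ge0}$ one value of $t$ at a time via Theorem~\ref{proposition3}, to transport every structural property across the dense subspace $i_\rho(\bi(\hi))$, and then to read off the generator relations by pushing the difference quotients $\frac{T_t(x)-x}{t}$ through $i_\rho$ and pairing against trace-class operators. For existence and uniqueness I would fix $t\ge0$: each $T_t$ is a Schwarz map for which $\omega_\rho$ is subinvariant, so Theorem~\ref{proposition3} gives a contraction $\widetilde{T_t}$ on $\si_2(\hi)$ with $\widetilde{T_t}(i_\rho(x))=i_\rho(T_t(x))$ for all $x\in\bi(\hi)$. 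Since $i_\rho(\bi(\hi))$ is dense in $\si_2(\hi)$ by Proposition~\ref{proposition2}(d) and $\widetilde{T_t}$ is bounded, this relation determines $\widetilde{T_t}$ uniquely. The identity $\widetilde{T_0}=I$ and the semigroup law follow on the dense set from $T_0=I$ and $T_{t+s}=T_tT_s$: indeed $\widetilde{T_{t+s}}(i_\rho(x))=i_\rho(T_tT_s(x))=\widetilde{T_t}\widetilde{T_s}(i_\rho(x))$, and two bounded operators agreeing on a dense set agree everywhere.

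For strong continuity I assume $(T_t)_{t\ge0}$ is weak$^*$ continuous. By the equivalence of strong and weak continuity for semigroups (\cite[Thm.~3.31]{attal}) it suffices to prove weak continuity of $(\widetilde{T_t})_{t\ge0}$, and since the $\widetilde{T_t}$ are uniform contractions and $i_\rho(\bi(\hi))$ is dense, it is enough to check that $t\mapsto \la i_\rho(w),\widetilde{T_t}(i_\rho(x))\ra_{\si_2(\hi)}$ is continuous for all $w,x\in\bi(\hi)$. Using the identity $\la i_\rho(a),i_\rho(b)\ra_{\si_2(\hi)}=\Tr(a^*\rho^{1/2}b\rho^{1/2})$ recorded before the theorem together with the intertwining relation, this pairing equals $\Tr\big((\rho^{1/2}w^*\rho^{1/2})T_t(x)\big)$, and $\rho^{1/2}w^*\rho^{1/2}\in\si_1(\hi)$. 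Because $\bi(\hi)=\si_1(\hi)^*$, the weak$^*$ continuity of $(T_t)_{t\ge0}$ makes this continuous in $t$, which is exactly weak continuity of $(\widetilde{T_t})_{t\ge0}$.

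The generator relations both rest on the identity $\frac{\widetilde{T_t}(i_\rho(x))-i_\rho(x)}{t}=i_\rho\big(\frac{T_t(x)-x}{t}\big)=\rho^{1/4}\frac{T_t(x)-x}{t}\rho^{1/4}$. If $x\in D(L)$, then $\frac{T_t(x)-x}{t}\to L(x)$ weak$^*$ in $\bi(\hi)$; pairing the left side against an arbitrary $y\in\si_2(\hi)$ yields $\Tr\big((\rho^{1/4}y^*\rho^{1/4})\frac{T_t(x)-x}{t}\big)$ with $\rho^{1/4}y^*\rho^{1/4}\in\si_1(\hi)$, so these difference quotients converge weakly in $\si_2(\hi)$ to $i_\rho(L(x))$. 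Since $(\widetilde{T_t})_{t\ge0}$ is strongly continuous, a weak limit of its difference quotient already forces membership in $D(\widetilde{L})$ with the same limit (\cite[Proposition~3.36]{attal}), whence $i_\rho(x)\in D(\widetilde{L})$ and $\widetilde{L}(i_\rho(x))=i_\rho(L(x))$.

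For the converse, suppose $i_\rho(x)\in D(\widetilde{L})$ and put $W:=\widetilde{L}(i_\rho(x))$, so $\rho^{1/4}\frac{T_t(x)-x}{t}\rho^{1/4}\to W$ in $\si_2(\hi)$-norm, hence entrywise against $(h_n)$. Writing $\rho h_n=\lambda_n h_n$ with every $\lambda_n>0$ (positivity of the eigenvalues follows from faithfulness of $\omega_\rho$ via Proposition~\ref{proposition1}) and $\rho^{1/4}h_n=\lambda_n^{1/4}h_n$, the $(n,m)$ entry reads $\lambda_n^{1/4}\lambda_m^{1/4}\la h_n,\frac{T_t(x)-x}{t}h_m\ra\to\la h_n,Wh_m\ra$; dividing by the positive scalar $\lambda_n^{1/4}\lambda_m^{1/4}$ shows $\lim_{t\to0}\la h_n,\frac{T_t(x)-x}{t}h_m\ra$ exists for every $n,m$, i.e. $x\in D(L_{(h_n)})$, with $[L_{(h_n)}(x)]_{n,m}=\lambda_n^{-1/4}\lambda_m^{-1/4}\la h_n,Wh_m\ra$. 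Each $h_n$ lies in $\Range(\rho^{1/4})$ and $(\rho^{1/4})^{(-1)}$ is self-adjoint with $(\rho^{1/4})^{(-1)}h_n=\lambda_n^{-1/4}h_n$, so letting the pseudoinverse act on the two eigenvectors gives precisely $[i_{\rho^{(-1)}}(W)]_{n,m}=\lambda_n^{-1/4}\lambda_m^{-1/4}\la h_n,Wh_m\ra$, establishing \eqref{LtotildeL} as an equality of matrices in $M_N^{(h_n)}$. The main obstacle is exactly this asymmetry: because $\rho^{1/4}$ is injective with dense but non-closed range, it is not boundedly invertible and $Wh_m$ need not lie in $\Range(\rho^{1/4})$, so one cannot cancel $\rho^{1/4}$ as a genuine operator and can recover only the \emph{extended} generator $L_{(h_n)}$ entrywise. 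The Moore-Penrose inverse is the device that legalizes the entrywise division by $\lambda_n^{1/4}\lambda_m^{1/4}$, and the care required is to read the right-hand side of \eqref{LtotildeL} as a matrix in $M_N^{(h_n)}$ computed by the self-adjoint pseudoinverse acting on the eigenbasis, rather than as an operator composition that may be ill-defined.
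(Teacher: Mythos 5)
Your proposal is correct and, apart from one step, follows the same path as the paper: existence and uniqueness via Theorem~\ref{proposition3} and the density of $i_\rho(\bi(\hi))$, the forward generator implication by pairing the weak$^*$-convergent difference quotients against $\rho^{1/4}y^*\rho^{1/4}\in\si_1(\hi)$ and invoking \cite[Proposition~3.36]{attal}, and the converse by testing the $\si_2(\hi)$-limit against eigenvectors of $\rho$; your explicit computation $(\rho^{1/4})^{(-1)}h_n=\lambda_n^{-1/4}h_n$ is exactly what the paper's substitution $h=(\rho^{1/4})^{(-1)}h_n$, $h'=(\rho^{1/4})^{(-1)}h_m$ together with $\rho^{1/4}(\rho^{1/4})^{(-1)}h_k=h_k$ amounts to. The one genuine difference is the continuity step: the paper proves strong continuity directly, expanding $\|\widetilde{T_t}(i_\rho(x))-i_\rho(x)\|_2^2$, using contractivity to absorb $\|\widetilde{T_t}(i_\rho(x))\|_2^2\le\|i_\rho(x)\|_2^2$, and reducing everything to $2\Re\Tr\left(\rho^{1/2}x^*\rho^{1/2}(x-T_t(x))\right)\to 0$; you instead prove weak continuity of $(\widetilde{T_t})_{t\ge0}$ from the same trace-class pairing and then cite the weak-equals-strong equivalence \cite[Thm.~3.31]{attal}. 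Both are sound and rest on the same two facts, namely the identity $\la i_\rho(w),i_\rho(b)\ra_{\si_2(\hi)}=\Tr(\rho^{1/2}w^*\rho^{1/2}b)$ and the weak$^*$ continuity of $(T_t)_{t\geq 0}$; the paper's norm computation is in effect the Hilbert-space polarization argument hiding inside that equivalence (weak continuity at a point plus contractivity forces norm continuity there), so your route simply outsources this to a quoted Banach-space theorem where a two-line estimate suffices. Finally, your insistence on reading \eqref{LtotildeL} as an identity of matrices in $M_N^{(h_n)}$, with the self-adjoint pseudoinverses acting on the eigenbasis rather than composed as operators, is the right interpretation and even handles a domain subtlety (whether $\widetilde{L}(i_\rho(x))(\rho^{1/4})^{(-1)}h_m$ lies in $\Ri(\rho^{1/4})$) more carefully than the paper's operator-composition notation does.
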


\begin{proof}
	The operators $\widetilde{T_t}$ are well-defined by Theorem \ref{proposition3}. 
	Uniqueness comes from 
	Equation \eqref{equation4} and the fact that $i_\rho (\mathcal{B}(\hi))$ is dense in 
	$\mathcal{S}_2(\hi)$.  It is easy to see that 
	$\widetilde{T}_{t+s}=\widetilde{T_t}\widetilde{T}_s$ and that $\widetilde{T}_0=1$ on 
	$i_\rho (\mathcal{B}(\hi))$, and the density 
	of $i_\rho (\mathcal{B}(\hi))$ implies these hold on all of $\mathcal{S}_2(\hi)$.
	
	For the continuity statement, it suffices to assume that $(T_t)_{t \geq 0}$ is 
	weak$^*$-continuous and show $(\widetilde{T_t})_{t \geq 0}$ is strongly continuous on 
	$i_\rho (\mathcal{B}(\hi))$, since 
	$i_\rho (\mathcal{B}(\hi))$ is dense in $\mathcal{S}_2(\hi)$ and $\widetilde{T_t}$ is a
	contraction on $\mathcal{S}_2(\hi)$ 
	for all $t \geq 0$.  To this end, let $x \in \mathcal{B}(\hi)$.  Then
	\begin{align*}
	\|\widetilde{T_t}(i_\rho (x))-i_\rho (x ) \|_2^2 
	& = \| \rho^{1/4}T_t(x) \rho^{1/4} - \rho^{1/4} x  \rho^{1/4} \|_2^2\\
	& = \| \rho^{1/4}T_t(x ) \rho^{1/4} \|_2^2 + \| \rho^{1/4} x \rho^{1/4} \|_2^2 \\ 
	& \quad - \la \rho^{1/4}x \rho^{1/4}, \rho^{1/4}T_t(x )\rho^{1/4} \ra_{\si_2(\hi )} \\
	& \quad -
	\la \rho^{1/4}T_t(x )\rho^{1/4}, \rho^{1/4} x \rho^{1/4} \ra_{\si_2(\hi )}\\ 
	& = \|\widetilde{T_t}(i_\rho (x))\|_2^2 + \|i_\rho (x)\|_2^2 \\
	& \quad - 
	2 \Re \la \rho^{1/4}x \rho^{1/4}, \rho^{1/4}T_t(x ) \rho^{1/4} \ra_{\si_2(\hi )} \\ 
	& \leq 2\|i_\rho (x )\|_2^2 - 
	2 \Re \left( tr(\rho^{1/4} x^* \rho^{1/4} \rho^{1/4}T_t(x)\rho^{1/4} ) \right) \\ 
	& = 2 \Re \left( tr( \rho^{1/4} x^* \rho^{1/4}\rho^{1/4}x  \rho^{1/4}{-}
	\rho^{1/4} x^*\rho^{1/4} \rho^{1/4}T_t(x)\rho^{1/4} ) \!\right) \\ 
	& =2 \Re \left(tr\left( \rho^{1/2} x^*\rho^{1/2} (x  -T_t(x)) \right) \right)
	\rightarrow 0
	\end{align*}
	since $\rho^{1/2} x^*\rho^{1/2}$ is trace-class.  Therefore 
	$(\widetilde{T_t})_{t \geq 0}$ is a strongly continuous 
	semigroup of contractions on $\mathcal{S}_2(\hi)$.
	
	To prove the final statement, first assume that $x \in D(L)$.  Then
	\begin{equation}\label{equationrmk1}
	\textrm{weak}^*-\lim_{t \rightarrow 0} \frac{T_t(x)-x}{t}=L(x).
	\end{equation}
	Notice that for every $y \in \mathcal{S}_2(\hi)$ we obtain, by 
	Proposition~\ref{proposition2}(c), that \linebreak
	$\rho^{1/4}y^* \rho^{1/4} \in \mathcal{S}_1(\hi)$ and therefore the map 
	$\mathcal{B}(\hi) \ni z \mapsto \Tr(z \rho^{1/4}y^* \rho^{1/4}) \in \C$ is 
	weak$^*$ continuous.  Thus Equation~(\ref{equationrmk1}) implies
	\[
	\Tr \left( \rho^{1/4}y^* \rho^{1/4} \frac{T_t(x)-x}{t}  \right) \xrightarrow{t\rightarrow 0} 
	\Tr \left( \rho^{1/4}y^* \rho^{1/4} L(x)  \right);
	\]
	that is,
	\[
	\left\la y , \rho^{1/4} \frac{T_t(x)-x}{t} \rho^{1/4} \right\ra_{\si_2(\hi )} \xrightarrow{t \rightarrow 0} 
	\left\la y , \rho^{1/4} L(x) \rho^{1/4} \right\ra_{\si_2(\hi )},
	\]
	and hence,
	\begin{equation}\label{equationrmk2}
	\left\la 
	y , \frac{ \widetilde{T_t}( \rho^{1/4}x \rho^{1/4})- \rho^{1/4} x \rho^{1/4}}{t} 
	\right\ra_{\si_2(\hi )} 
	\xrightarrow{t \rightarrow 0} \left\la y, \rho^{1/4} L(x) \rho^{1/4} \right\ra_{\si_2(\hi )}.
	\end{equation}
	By \cite[Proposition 3.36]{attal}, we obtain that $\rho^{1/4}x\rho^{1/4} \in D(\widetilde{L})$
	and
	$ \widetilde{L}(\rho^{1/4}x\rho^{1/4})=\rho^{1/4}L(x)\rho^{1/4}.$
	
	Conversely, by the Spectral Theorem there exists an orthonormal
	basis $(h_n)_{n \in \N}$ of $\hi$ formed by eigenvectors of 
	$\rho$. Let $L_{(h_n)}$ denote the generator of $(T_t)_{t \geq 0}$ 
	with respect to $(h_n)_{n \in \N}$. 
	Let $x \in \bi (\hi )$ and assume that 
	$\rho^{1/4}x\rho^{1/4} \in D(\widetilde{L})$. Then 
	we have that 
	\[
	\frac{ \widetilde{T_t}( \rho^{1/4}x \rho^{1/4})- \rho^{1/4} x \rho^{1/4}}{t} 
	\xrightarrow{t \rightarrow 0} \widetilde{L}(\rho^{1/4} x \rho^{1/4}) \quad 
	\textrm{in }\si_2 (\hi ),
	\]
	and hence
	\begin{equation} \label{E:s2conv}
	\rho^{1/4} \frac{T_t(x)-x}{t} \rho^{1/4} 
	\xrightarrow{t \rightarrow 0} \widetilde{L}(\rho^{1/4} x \rho^{1/4}) \quad 
	\textrm{in }\si_2 (\hi ).
	\end{equation}
	We will prove that $x \in D (L_{(h_n)})$. Indeed, we have that \[\langle h,\rho^{1/4} \frac{T_t(x)-x}{t} \rho^{1/4} h'\rangle
	\xrightarrow{t \rightarrow 0} \langle h,\widetilde{L}(\rho^{1/4} x \rho^{1/4})h'\rangle\] for all $h,h'\in\hi$, so for any $n,m\in\N$ we may set $h=(\rho^{1/4})^{(-1)}h_n$ and $h'=(\rho^{1/4})^{(-1)}h_m$ to obtain \begin{align*} \langle (\rho^{1/4})^{(-1)}h_n,\rho^{1/4} \frac{T_t(x)-x}{t} \rho^{1/4}(\rho^{1/4})^{(-1)}h_m\rangle
	\\  \xrightarrow{t \rightarrow 0} \langle(\rho^{1/4})^{(-1)}h_n,\widetilde{L}(\rho^{1/4} x \rho^{1/4})(\rho^{1/4})^{(-1)}h_m\rangle.\end{align*} Noting that $(\rho^{1/4})^\ast=\rho^{1/4},$ $((\rho^{1/4})^{(-1)})^\ast=(\rho^{1/4})^{(-1)}$, and $\rho^{1/4}(\rho^{1/4})^{(-1)}h_k=h_k$ for all $k\in\N$, this implies \[\langle h_n, \frac{T_t(x)-x}{t} h_m\rangle\xrightarrow{t \rightarrow 0} \langle h_n,(\rho^{1/4})^{(-1)}\widetilde{L}(\rho^{1/4} x \rho^{1/4})(\rho^{1/4})^{(-1)}h_m\rangle.\] Because this limit exists for all $n,m\in \N$ we have $x \in D (L_{(h_n)})$, and moreover \[L_{(h_n)}(x)= (\rho^{1/4})^{(-1)}\widetilde{L}(\rho^{1/4} x \rho^{1/4})(\rho^{1/4})^{(-1)}.\]
\qed\end{proof}

\begin{remark} \label{R:morekfgv}
	Since the proof of Equation~(\ref{E:kfgv}) is not included
	in \cite{kfgv}, we want to mention that its proof follows from our Remark~\ref{R:kfgv}
	in a similar way that our Theorem~\ref{theorem4} followed from our 
	Theorem~\ref{proposition3} (even the proof of the strong continuity of the semigroup
	$(\overline{T_t})_{t \geq 0}$ follows the exact same argument as the proof of the strong 
	continuity of the semigroup $(\widetilde{T_t})_{t \geq 0}$ that appeared in 
	Theorem~\ref{theorem4}).
	Moreover, the assumptions that the faithful state is normal and invariant for the semigroup
	and that the operators of the semigroup are completely positive
	that are mentioned in 
	\cite{kfgv} for Equation~(\ref{E:kfgv})  are not needed for its proof,
	because such assumptions were not used in 
	Remark~\ref{R:kfgv}. Instead, for the validity of Equation~(\ref{E:kfgv}), one merely needs to assume that 
	the faithful state is subinvariant for the semigroup of Schwarz maps. 
	Note also that, unlike Equation~(\ref{E:kfgv}),
	Theorem~\ref{theorem4} relates the generators of the two semigroups.
\end{remark}

\section{Applications to Quantum Markov Semigroups and Their Generators}\label{QMS}

Since quantum Markov semigroups (QMSs) are semigroups of completely positive maps
on von~Neumann algebras
(and hence $2$-positive maps and hence Schwarz maps), we naturally obtain applications 
of Theorem~\ref{theorem4} in the study of QMSs.  
The existence of invariant normal states for QMSs has been discussed in \cite{fr1} and
\cite{fr2}. Sufficient conditions for a semigroup to be    
decomposable into a sequence of irreducible semigroups each of them having an invariant normal state are 
given in \cite{Uman06} (see top half of page 608, Theorem~5 on page 608, and
Proposition~5 on page 609).  
There are many results in the literature of semigroups 
which depend on the existence of invariant faithful normal states (for example, see
\cite{frigerio1}, \cite{frigerio2}, \cite{fr3}, 
\cite{fr4}, and \cite{csu2}) and this assumption is often taken for granted as being physically
reasonable. 
QMSs have been extensively studied since the 1970s with the exact form for the
generators being one of the topics 
which has garnered a fair amount of attention. See for example \cite{L}, \cite{GKS},
\cite{ce}, 
\cite{davies2}, \cite{holevo},  \cite{ab}, \cite{az}, and \cite{hsw}.  
The generator of a QMS is a generally unbounded operator defined on a weak$^*$ dense
linear subspace of
$\bi (\hi )$. If the generator is bounded then the semigroup is uniformly continuous and the exact form of the generator 
was found in \cite{GKS} and \cite{L}.
In this section, given a Hilbert space $\hi$ and a QMS on $\bi (\hi)$ having an 
invariant faithful normal state we study the associated semigroup of contractions on 
$\si_2 (\hi )$. In particular, in Theorem~\ref{Thm:final2} we describe the extended generator (and hence generator) of 
such a QMS under the assumption that the generator of the associated semigroup on $\si_2(\hi)$ has compact resolvent.

\begin{definition}
	A \textbf{quantum Markov semigroup (QMS)} on 
	$\bi (\hi)$,
	(for some  Hilbert space $\hi$), is 
	a weak$^*$-continuous one-parameter semigroup of bounded linear 
	operators acting on $\bi (\hi)$,  such that each member of the 
	semigroup is 
	completely positive and identity preserving. 
\end{definition}

\begin{remark}
	If $\hi$ is a Hilbert space and $(T_t)_{t \geq 0}$ is a QMS on 
	$\bi (\hi )$ which has a subinvariant  normal state 
	$\omega_\rho$ for some  $\rho \in \si_1 (\hi )$, then $\omega_\rho$
	is in fact an invariant state for $(T_t)_{t \geq 0}$. Indeed for every
	$t \geq 0$,
	\[
	\Tr (T_t^\dagger (\rho))= \Tr (T_t^\dagger (\rho) 1 )= \Tr (\rho T_t(1))=
	\Tr(\rho 1)= \Tr (\rho ),
	\]
	which together with $T_t^\dagger (\rho) \leq \rho$ implies that 
	$T_t^\dagger (\rho) = \rho$.
\end{remark}

Usually the notion of complete positivity applies to maps on $C^*$-algebras. In particular,
if the $C^*$-algebra
is equal to $\bi (\hi )$ for some Hilbert space $\hi$,  then the notion of 
complete positivity becomes equivalent  
to the following: A map $ \ti :\bi (\hi ) \to \bi (\hi )$ is completely positive if and only if for
every $n \in \N$, 
$x_1, \ldots , x_n \in \bi (\hi )$ and $h_1, \ldots , h_n \in \hi$,
\begin{equation} \label{E:cp}
\sum_{i,j =1}^n \la h_i, \ti (x_i^* x_j) h_j \ra \geq 0 .
\end{equation}
Note that Equation~(\ref{E:cp}) makes perfect sense even if the map $\ti$ is not defined 
on a $C^*$-algebra,
as long as $\ti$ is defined on a Banach $^*$-algebra $\si$ of operators on a
Hilbert space $\hi$. For example, $\si$ can be equal to $\mathcal{S}_2(\hi)$ and 
$\ti$ can be a bounded linear operator from $\si$ to $\si$. We make this
extension of the notion of complete positivity in the next definition.

\begin{definition}
	Let $\hi$ be a Hilbert space and $\si$ be a Banach $^*$-algebra of bounded linear
	operators on $\hi$.  A bounded linear map $\ti: \si \rightarrow \si$ will be called 
	\textbf{completely positive} if for every $n \in \N$, $x_1, \dots ,x_n \in \si$ and 
	$h_1, \dots ,h_n \in \hi$, Equation \eqref{E:cp} holds.
\end{definition}

This terminology will be used in the next result.

\begin{proposition}\label{proposition4.3}		
	Let $(T_t)_{t \geq 0}$ be a weak$^*$-continuous semigroup of Schwarz maps on $\bi (\hi )$ for some Hilbert space $\hi$ which possesses an invariant faithful normal state $\omega_{\rho}$ for some 
	$\rho \in \mathcal{S}_1(\hi)$. Then the operators $T_t$ are completely positive for all $t \geq 0$ if and only if the operators $\widetilde{T_t}$ constructed in Theorem~\ref{theorem4} are completely positive for all $t \geq 0$.
\end{proposition}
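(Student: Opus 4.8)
The plan is to route both implications through a single algebraic identity and then split the work into an easy density argument and a harder limiting argument. First I would record that, for $x_1,\dots,x_n\in\bi(\hi)$, the elements $y_i:=i_\rho(x_i)=\rho^{1/4}x_i\rho^{1/4}\in\si_2(\hi)$ satisfy
\[
y_i^*y_j=\rho^{1/4}x_i^*\rho^{1/2}x_j\rho^{1/4}=i_\rho((\rho^{1/4}x_i)^*(\rho^{1/4}x_j)),
\]
so that Equation~(\ref{equation4}) gives $\widetilde{T_t}(y_i^*y_j)=\rho^{1/4}T_t((\rho^{1/4}x_i)^*(\rho^{1/4}x_j))\rho^{1/4}$ and hence
\[
\sum_{i,j}\la h_i,\widetilde{T_t}(y_i^*y_j)h_j\ra=\sum_{i,j}\la \rho^{1/4}h_i,\,T_t((\rho^{1/4}x_i)^*(\rho^{1/4}x_j))\,\rho^{1/4}h_j\ra .
\]
The right-hand side is exactly the complete-positivity sum for $T_t$ evaluated at the operators $\rho^{1/4}x_i\in\bi(\hi)$ and the vectors $\rho^{1/4}h_i\in\hi$.

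For the direction ``$T_t$ completely positive $\Rightarrow \widetilde{T_t}$ completely positive'' the identity makes the sum $\ge 0$ whenever every $y_i$ lies in $i_\rho(\bi(\hi))$. I would then extend to arbitrary $y_1,\dots,y_n\in\si_2(\hi)$ using that $i_\rho(\bi(\hi))$ is dense in $\si_2(\hi)$ (Proposition~\ref{proposition2}(d)): for fixed $h_i$ the functional $(y_i)\mapsto\sum_{i,j}\la h_i,\widetilde{T_t}(y_i^*y_j)h_j\ra$ is continuous on $\si_2(\hi)^n$, because $(y_i,y_j)\mapsto y_i^*y_j$ is jointly continuous from $\si_2(\hi)\times\si_2(\hi)$ into $\si_1(\hi)\subseteq\si_2(\hi)$ by Holder's inequality, $\widetilde{T_t}$ is a contraction on $\si_2(\hi)$, and $w\mapsto\la h_i,wh_j\ra$ is $\|\cdot\|_2$-continuous. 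Thus the inequality survives the limit and $\widetilde{T_t}$ is completely positive.

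For the converse the same identity yields $\sum_{i,j}\la\eta_i,T_t(a_i^*a_j)\eta_j\ra\ge 0$ for all $a_i\in\rho^{1/4}\bi(\hi)$ and all $\eta_i\in\Range(\rho^{1/4})$, and I must promote this to arbitrary $b_i\in\bi(\hi)$ and $\xi_i\in\hi$. The vectors cause no trouble, since $\Range(\rho^{1/4})$ is dense (Remark~\ref{rmk1}) and the sum is continuous in the $\eta_i$. The genuine obstacle is the operators: because $\rho^{1/4}$ is compact, $\rho^{1/4}\bi(\hi)$ consists of compact operators and is \emph{not} norm dense in $\bi(\hi)$, so no operator-norm approximation of a general $b_i$ is available.

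To overcome this I would approximate in a topology compatible with the normality of $T_t$. Diagonalizing $\rho^{1/4}=\sum_n\lambda_n|e_n\ra\la e_n|$ with $\lambda_n>0$ and setting $P_k=\sum_{n\le k}|e_n\ra\la e_n|$, I note that for $b\in\bi(\hi)$ the operator $x^{(k)}:=(\rho^{1/4})^{(-1)}P_k b=(\sum_{n\le k}\lambda_n^{-1}|e_n\ra\la e_n|)b$ is bounded and satisfies $\rho^{1/4}x^{(k)}=P_k b\in\rho^{1/4}\bi(\hi)$. Applying the inequality with $a_i^{(k)}=P_k b_i$ and observing that $(a_i^{(k)})^*a_j^{(k)}=b_i^*P_k b_j\to b_i^*b_j$ in the strong operator topology with uniform bound $\|b_i\|\,\|b_j\|$, I would pass this convergence through $T_t$ using that each $T_t$ is weak$^*$-continuous (normal), hence $\sigma$-weakly continuous on bounded sets, to conclude $\sum_{i,j}\la\xi_i,T_t(b_i^*b_j)\xi_j\ra\ge 0$ for all $b_i\in\bi(\hi)$ and $\xi_i\in\hi$. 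This final passage to the limit, where the normality of $T_t$ is indispensable, is the crux of the argument; everything else is density together with routine continuity estimates.
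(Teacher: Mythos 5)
Your forward implication is essentially the paper's: the identity $\widetilde{T_t}\bigl(i_\rho(x_i)^*i_\rho(x_j)\bigr)=\rho^{1/4}T_t\bigl((\rho^{1/4}x_i)^*(\rho^{1/4}x_j)\bigr)\rho^{1/4}$ together with density of $i_\rho(\bi(\hi))$ in $\si_2(\hi)$ and $\|\cdot\|_2$-continuity of the sum. The converse, however, contains a genuine gap. Your crucial step passes the bounded SOT-convergence $b_i^*P_kb_j\to b_i^*b_j$ through $T_t$ ``using that each $T_t$ is weak$^*$-continuous (normal)''. But normality of the individual maps $T_t$ is not among the hypotheses: in this paper, ``weak$^*$ continuous semigroup'' (Definition~\ref{Def:semigroupcontinuity}) means that $t\mapsto (T_t(x))(y)$ is continuous in the \emph{time parameter} for each fixed $x\in\bi(\hi)$ and $y\in\si_1(\hi)$; it does not assert that each map $T_t\colon\bi(\hi)\to\bi(\hi)$ is $\sigma$-weakly (weak$^*$-to-weak$^*$) continuous. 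Nothing else in your argument supplies this property --- the Schwarz property and the existence of an invariant faithful normal state do not obviously yield it --- so the limiting step at the crux of your converse is unjustified, and as written your proof establishes the converse only under an additional hypothesis.

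The obstruction is self-inflicted by applying complete positivity of $\widetilde{T_t}$ at $y_i=i_\rho(x_i)$: this sandwiches $x_i$ by $\rho^{1/4}$ on \emph{both} sides, so you only reach operators $a_i=\rho^{1/4}x_i$ and are forced to approximate in the operator slot. The paper instead applies it at $y_i=x_i\rho^{1/4}$, so that $y_i^*y_j=\rho^{1/4}x_i^*x_j\rho^{1/4}=i_\rho(x_i^*x_j)$ already contains an \emph{arbitrary} product $x_i^*x_j$. Then, by \eqref{equation4},
\[
\sum_{i,j}\la h_i,\widetilde{T_t}(y_i^*y_j)h_j\ra=\sum_{i,j}\la \rho^{1/4}h_i,\,T_t(x_i^*x_j)\,\rho^{1/4}h_j\ra\geq 0,
\]
and the only extension needed is density of $\Range(\rho^{1/4})$ in $\hi$ --- an approximation in the vector slots, where the sum is norm-continuous because each $T_t(x_i^*x_j)$ is a bounded operator. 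No approximation in the operator variable occurs, hence no normality of $T_t$ is ever invoked. (A pedantic point the paper glosses over: $x_i\rho^{1/4}\in\si_4(\hi)$ need not be Hilbert--Schmidt, so strictly one should apply complete positivity of $\widetilde{T_t}$ to the finite-rank operators $y_i^{(k)}=x_i\rho^{1/4}P_k$, with $P_k$ the spectral projections of $\rho$, and let $k\to\infty$; this uses only $\|P_k\, i_\rho(x_i^*x_j)\,P_k-i_\rho(x_i^*x_j)\|_2\to 0$ and the $\|\cdot\|_2$-continuity of $\widetilde{T_t}$ --- again, no normality. Your route admits no such cheap repair.)
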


\begin{proof} First, assume $T_t$ is completely positive for $t\geq0$, and let $x_1,x_2, \dots , x_n \in \mathcal{B}(\hi)$ and $h_1,h_2, \dots ,h_n \in \hi$. 
	Then	
	\begin{align*}
	& \sum_{i,j=1}^n 
	\left\la h_i, \widetilde{T}_t \left( (\rho^{1/4}x_i \rho^{1/4})^*(\rho^{1/4}x_j \rho^{1/4}) \right)h_j \right\ra  \\
	=&
	\sum_{i,j=1}^n 
	\left\la \rho^{1/4}h_i , T_t \left( (\rho^{1/4}x_i)^*(\rho^{1/4}x_j) \right) \rho^{1/4}h_j \right\ra
	\geq 0
	\end{align*}
	since $T_t$ is completely positive.  Further, since the map $i_\rho$ from 
	Proposition~\ref{proposition2} has dense range, $\widetilde{T}_t$ is completely positive 
	on $\mathcal{S}_2(\hi)$.
	
	Conversely, assume $\widetilde{T}_t$ is completely positive for $t\geq0$, and let $t \geq 0$, $x_1,x_2, \dots , x_n \in \mathcal{B}(\hi)$ and $h_1,h_2, \dots ,h_n \in \hi$. Then \[
	\sum_{i,j=1}^n\left\la \rho^{1/4}h_i,T_t(x_i^*x_j)\rho^{1/4}h_j\right\ra = \sum_{i,j=1}^n\left\la h_j,\widetilde{T}_t((x_i\rho^{1/4})^*(x_j\rho^{1/4}))h_j\right\ra\geq0\]
	since $\widetilde{T}_t$ is completely positive. Because the map $\rho^{1/4}$ has dense range, this is sufficient to show $T_t$ is completely positive.
\qed\end{proof}

For the next result, recall the notion of conditionally completely positive maps introduced by 
Lindblad in \cite{L}.
A linear operator $L : D(L) (\subseteq {\mathcal B}(\hi)) \to {\mathcal B}(\hi)$ is called 
\textbf{conditionally completely positive}
if for all $n \in \N$, for all $a_1,a_2, \dots, a_n \in {\mathcal B}(\hi)$ such that 
$a_i^*a_j \in D(L)$ for all $i,j=1,2,\dots ,n$,
that for all $h_1,h_2, \dots h_n \in \hi$ with $\sum_{i=1}^n a_i(h_i)=0$, we have that
\[
\sum_{i,j=1}^n \la h_i, L(a_i^*a_j)h_j \ra \geq 0 .
\]
The next result is known for uniformly continuous semigroups. For example, see 
\cite[Proposition 3.12 and Lemma 3.13]{fagnola1}, or see \cite[Proposition~2.9]{Evans2}.  
In fact the known proof works for a more general setting as the next result indicates.

\begin{theorem}\label{theoremccp}
	Let $\si$ be a Banach $^*$-algebra of operators acting on a Hilbert space 
	$\hi$.  
	\begin{enumerate}
		\item Let $(\ti_t)_{t \geq 0}$ be a WOT continuous semigroup on $\si$ and let 
		$\li$ be its generator.  
		If $\ti_t$ is completely positive for all $t \geq 0$ then $\li (a^*)=\li (a)^*$ for all 
		$a \in D(\li )$ and $\li$ is conditionally completely positive. 
		
		\item Let $(\ti_t)_{t \geq 0}$ be a uniformly continuous semigroup on $\si$ 
		with generator $\li$.  
		If $\li (a^*)=\li (a)^*$ for all $a \in \si$ and $\li$ is conditionally completely positive,
		then $\ti_t$ is completely positive for all $t \geq 0$.
	\end{enumerate}
\end{theorem}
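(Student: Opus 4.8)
The plan is to treat the two implications separately, since \textbf{Part (1)} is a limiting argument while \textbf{Part (2)} rests on a structure theorem for the bounded generator.

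For \textbf{Part (1)}, I would first record that each completely positive $\ti_t$ is in particular positive (take $n=1$ in \eqref{E:cp}), so it sends $x^*x$ to a positive operator; writing a self-adjoint element of $\si$ as a difference of positive elements and then splitting a general $a\in\si$ into real and imaginary parts gives $\ti_t(a^*)=\ti_t(a)^*$. Since the involution is WOT continuous, for $a\in D(\li)$ I can pass to the limit in
\[
\frac{\ti_t(a^*)-a^*}{t}=\left(\frac{\ti_t(a)-a}{t}\right)^*
\]
to conclude $a^*\in D(\li)$ and $\li(a^*)=\li(a)^*$. For conditional complete positivity, fix $a_1,\dots,a_n$ with $a_i^*a_j\in D(\li)$ and $h_1,\dots,h_n\in\hi$ with $\sum_i a_ih_i=0$. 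Complete positivity of $\ti_t$ gives $\sum_{i,j}\la h_i,\ti_t(a_i^*a_j)h_j\ra\ge 0$ for every $t>0$, while the constraint forces the value at $t=0$, namely $\sum_{i,j}\la h_i,a_i^*a_jh_j\ra=\|\sum_i a_ih_i\|^2$, to vanish. Dividing by $t$ and using $a_i^*a_j\in D(\li)$ to take the WOT limit of each of the finitely many difference quotients, the nonnegative quantities $\tfrac1t\sum_{i,j}\la h_i,\ti_t(a_i^*a_j)h_j\ra$ converge to $\sum_{i,j}\la h_i,\li(a_i^*a_j)h_j\ra$, which is therefore $\ge 0$.

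For \textbf{Part (2)}, where $\li$ is bounded, the plan is the perturbation (Dyson series) argument underlying the known proofs (see \cite{Evans2,fagnola1}). Running the computation of Part (1) in reverse, with a unit adjoined and $a_0=1$, shows that conditional complete positivity of $\li$ forces positivity of the ``second coboundary'' kernel
\[
D(x,y)=\li(x^*y)-\li(x^*)y-x^*\li(y)+x^*\li(1)y,
\]
meaning $\sum_{i,j}\la h_i,D(x_i,x_j)h_j\ra\ge 0$ for all choices. A Stinespring/GNS-type dilation applied to this positive kernel, together with the Hermitian property $\li(a^*)=\li(a)^*$, produces a completely positive map $\Psi$ and an element $G$ (in the multiplier algebra of $\si$) with
\[
\li(a)=\Psi(a)+G^*a+aG \quad\textrm{for all }a\in\si.
\]
Since $a\mapsto G^*a+aG$ generates the conjugation semigroup $\mathcal U_t(a)=e^{tG^*}ae^{tG}$, which is completely positive, and $\Psi$ is a bounded completely positive perturbation, I would expand $\ti_t=e^{t\li}$ as the norm-convergent Dyson series
\[
\ti_t=\sum_{k=0}^\infty\int_{0\le s_1\le\cdots\le s_k\le t}\mathcal U_{t-s_k}\,\Psi\,\mathcal U_{s_k-s_{k-1}}\cdots\Psi\,\mathcal U_{s_1}\,ds.
\]
Each integrand is a composition of completely positive maps, hence completely positive; the simplex integrals and the sum preserve complete positivity, so $\ti_t$ is completely positive.

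The hard part will be the decomposition step in Part (2): extracting $\Psi$ and $G$ from the positive kernel $D$ via the Kolmogorov/Stinespring construction, and handling the possibly non-unital Banach $^*$-algebra $\si$ (for which both ``$\li(1)$'' and $G$ must be interpreted in the unitization or multiplier algebra). Everything else—the limiting arguments of Part (1), the passage from conditional complete positivity to $D\ge 0$, and the convergence and complete positivity of the Dyson series for the bounded generator—is routine once this decomposition is in hand.
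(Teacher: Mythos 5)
Your Part (1) is essentially the paper's own proof. The paper argues exactly as you do: since $\sum_i a_i(h_i)=0$, the identity term $\tfrac1t\sum_{i,j}\la h_i,a_i^*a_jh_j\ra=\tfrac1t\big\|\sum_i a_i(h_i)\big\|^2$ vanishes, so the difference quotients $\sum_{i,j}\la h_i,\tfrac{(\ti_t-1)(a_i^*a_j)}{t}h_j\ra$ coincide with the nonnegative quantities $\tfrac1t\sum_{i,j}\la h_i,\ti_t(a_i^*a_j)h_j\ra$, and their limit $\sum_{i,j}\la h_i,\li(a_i^*a_j)h_j\ra$ is $\geq 0$. One caveat: your derivation of $\ti_t(a^*)=\ti_t(a)^*$ rests on writing a self-adjoint element of $\si$ as a difference of positive elements that are squares $x^*x$ with $x\in\si$. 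In a general Banach $^*$-algebra of operators this fails; e.g.\ in $\si_2(\hi)$ a positive Hilbert--Schmidt operator $p$ is of the form $x^*x$ with $x\in\si_2(\hi)$ only when $p$ is trace class. The robust route is to note that positivity of the form in \eqref{E:cp} forces Hermitian symmetry of the kernel, $\ti_t(x^*y)^*=\ti_t(y^*x)$, hence star-preservation on products, and then to pass to general elements by polarization, linearity, and density of the span of products. (The paper's proof silently omits this half of the statement altogether, so you are not behind it here.)

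Part (2) is where you genuinely diverge from the paper, and where the gap lies. The paper gives no argument at all for (2): it invokes \cite[Proposition 3.12 and Lemma 3.13]{fagnola1} (cf.\ also \cite[Proposition 2.9]{Evans2}), remarking only that the known proof carries over to this setting. Your plan --- extract a Christensen--Evans-type decomposition $\li(a)=\Psi(a)+G^*a+aG$ with $\Psi$ completely positive from the positive kernel $D$, then expand $e^{t\li}$ as a Dyson series whose terms are compositions of completely positive maps --- is precisely the proof that stands behind that citation, so the strategy is the right one and, if executed, would make the result self-contained where the paper is not. But the decomposition step you defer is the entire mathematical content of the cited results, not a routine verification: carrying out the Kolmogorov/Stinespring construction and locating $G$ is delicate precisely in the non-unital case, which is the relevant one here (the paper applies the theorem with $\si=\si_2(\hi)$, non-unital when $\dim\hi=\infty$), where both $\li(1)$ and $G$ must live in a multiplier algebra and the kernel $D$ itself needs reinterpretation since there is no unit to plug in. As written, your proposal proves Part (1) but only outlines Part (2); to be complete it would either have to execute that decomposition or, as the paper does, lean on the literature for it.
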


\begin{proof}
	The proof of (2) immediately follows from 
	\cite[Proposition 3.12 and Lemma 3.13]{fagnola1}.  To prove (1),  suppose 
	$a_1,a_2, \dots , a_n \in \si$ such that $a_i^*a_j \in D(L)$ for all $i,j=1,\dots , n$ 
	and $h_1,h_2, \dots ,h_n \in \hi$ such that 
	$\sum_{i=1}^n a_i (h_i) =0$.  Then,
	\begin{align*}
	\sum_{i,j=1}^n 
	\la h_i, \li (a_i^*a_j)h_j \ra & = 
	\lim_{t \rightarrow 0^+} \sum_{i,j=1}^n \frac{1}{t} \la h_i, (T_t-1)(a_i^*a_j)h_j \ra \\ 
	& =\lim_{t \rightarrow 0^+} \sum_{i,j=1}^n \frac{1}{t} \la h_i, T_t(a_i^*a_j)h_j \ra 
	\quad   (\textrm{since } \sum_{i=1}^na_i(h_i)=0) \\ 
	& \geq 0  
	\end{align*}
	since $T_t$ is completely positive for all $t \geq 0$.
\qed\end{proof}

\begin{corollary}\label{corollary2}		
	Let $\hi$ be a Hilbert space and  $(T_t)_{t \geq 0}$ be a QMS on $\mathcal{B}(\hi)$ which possesses
	an invariant faithful normal state $\omega_{\rho}$ for some $\rho \in \mathcal{S}_1(\hi)$.  
	Let $\widetilde{L}$ be the generator of the strongly continuous semigroup $(\widetilde{T}_t)_{t \geq 0}$ of contractions on $\mathcal{S}_2(\hi)$ defined in Theorem~\ref{theorem4}.  Then $\widetilde{L}(a^*)= \widetilde{L}(a)^*$
	for all $a \in D(\widetilde{L})$ and $\widetilde{L}$ is conditionally completely positive.
\end{corollary}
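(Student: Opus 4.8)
The plan is to obtain this corollary as an immediate application of Theorem~\ref{theoremccp}(1) to the induced semigroup $(\widetilde{T}_t)_{t \geq 0}$ on $\si_2(\hi)$; the entire argument reduces to verifying that the hypotheses of that theorem are met with $\si = \si_2(\hi)$, $(\ti_t)_{t\geq0} = (\widetilde{T}_t)_{t\geq0}$, and $\li = \widetilde{L}$. First I would check that $\si_2(\hi)$ is a Banach $^*$-algebra of operators on $\hi$, as required by Theorem~\ref{theoremccp}. It is a Banach space under $\| \cdot \|_2$ and is closed under the adjoint, which is an isometric involution. For the algebra structure, if $A, B \in \si_2(\hi)$ then $AB \in \si_1(\hi) \subseteq \si_2(\hi)$, and $\|AB\|_2 \leq \|A\|_2 \|B\|_{\infty} \leq \|A\|_2 \|B\|_2$ since $\|B\|_{\infty} \leq \|B\|_2$; hence $\si_2(\hi)$ is indeed a Banach $^*$-algebra.

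Next I would establish the continuity hypothesis. Since $(T_t)_{t \geq 0}$ is a QMS it is by definition weak$^*$-continuous, so Theorem~\ref{theorem4} guarantees that the induced semigroup $(\widetilde{T}_t)_{t \geq 0}$ is strongly continuous on $\si_2(\hi)$. Strong continuity implies WOT continuity, which is exactly the continuity assumption demanded by Theorem~\ref{theoremccp}(1). Then I would establish the complete positivity of each $\widetilde{T}_t$: every $T_t$ is completely positive (being a member of a QMS, and in particular a Schwarz map), and $\omega_\rho$ is an invariant faithful normal state, so Proposition~\ref{proposition4.3} applies and yields that each $\widetilde{T}_t$ is completely positive on $\si_2(\hi)$ in the extended sense of Equation~(\ref{E:cp}).

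With all hypotheses verified, Theorem~\ref{theoremccp}(1), applied to $(\widetilde{T}_t)_{t \geq 0}$ with generator $\widetilde{L}$, delivers precisely the two conclusions: $\widetilde{L}(a^*) = \widetilde{L}(a)^*$ for all $a \in D(\widetilde{L})$, and the conditional complete positivity of $\widetilde{L}$. There is no genuine obstacle beyond this bookkeeping; the point meriting the most care is simply confirming that the notion of complete positivity furnished by Proposition~\ref{proposition4.3} coincides with the extended notion in Equation~(\ref{E:cp}) used by Theorem~\ref{theoremccp}, which holds by construction since both are phrased via the inequality in Equation~(\ref{E:cp}) for the Banach $^*$-algebra $\si_2(\hi)$.
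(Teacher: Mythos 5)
Your proposal is correct and follows exactly the paper's route: the paper's proof of Corollary~\ref{corollary2} is the one-line observation that it ``follows immediately from Proposition~\ref{proposition4.3} and Theorem~\ref{theoremccp}(1).'' Your additional bookkeeping---checking that $\si_2(\hi)$ is a Banach $^*$-algebra, that strong continuity of $(\widetilde{T}_t)_{t\geq 0}$ (from Theorem~\ref{theorem4}) implies the WOT continuity required by Theorem~\ref{theoremccp}(1), and that the notions of complete positivity match---is precisely the verification the paper leaves implicit.
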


\begin{proof}
	The proof follows immediately from Proposition~\ref{proposition4.3} and 
	Theorem~\ref{theoremccp}(1).
\qed\end{proof}

\subsection{The Form of $\boldsymbol{L_{(h_n)}}$ when the Resolvent of $\boldsymbol{\widetilde{L}}$ is Compact}\label{compactresolvent}

In this subsection we consider the form of the extended generator $L_{(h_n)}$ when the resolvent of $\widetilde{L}$ is compact, by which we mean that $(\widetilde{L}-\lambda)^{-1}$ is compact for some $\lambda$ in the resolvent set of $\widetilde{L}$ (equivalently all $\lambda$ in the resolvent set, by the resolvent identity). Notably, this assumption implies $\widetilde{L}$ is necessarily unbounded if $\hi$ is infinite dimensional, since the composition of bounded with compact is compact but $I=(\widetilde{L}-\lambda)(\widetilde{L}-\lambda)^{-1}$ is not compact. More information about operators on a Hilbert space with compact resolvent can be found in Theorems XIII.4.1 and XIII.4.2 of \cite{DS}. Examples of such operators are the diagonal operator with eigenvalues $1,2,3,\ldots$, or the Laplacian on a bounded domain of $\R^d$.

We will make use of the following two notations:

\noindent \textbf{Notation:}
	Let $\hi$ be a Hilbert space and  $w,z \in \mathcal{S}_2(\hi)$.  Define 
	$M_w \otimes z :\mathcal{S}_2(\hi) \otimes \hi \rightarrow \mathcal{S}_2(\hi) \otimes \hi$ 
	by
	\[
	M_w \otimes z \left( \sum_{i=1}^k x_i \otimes h_i \right)=\sum_{i=1}^kx_iw \otimes z(h_i) .
	\]

\noindent \textbf{Notation:}
	Let $\hi$ be a Hilbert space and $e \in \hi $ such that $\| e \|=1$. 
	Define $T_e :\mathcal{S}_2(\hi) \otimes \hi \rightarrow \mathcal{S}_2(\hi) \otimes \hi$ by
	\[
	T_e  \left(\sum_{i=1}^k x_i \otimes h_i \right)= \sum_{i=1}^k |x_i(h_i) \ra \la e| \otimes e .
	\]

We are now ready to state the main result:

\begin{theorem} \label{Thm:final2}
	Let $\hi$ be a Hilbert space, $(T_t)_{t \geq 0}$ be a QMS on 
	$\mathcal{B}(\hi)$ having an invariant faithful normal state 
	$\omega_{\rho}$ for some 
	$\rho \in \mathcal{S}_1(\hi)$, and $L$ be the generator of 
	$(T_t)_{t \geq 0}$.  
	Let $(\widetilde{T}_t)_{t \geq 0}$ be the strongly continuous
	semigroup of contractions on $\mathcal{S}_2(\hi)$ defined in Theorem~\ref{theorem4} and
	let $\widetilde{L}$ be its generator. Assume that the generator $\widetilde{L}$ has compact resolvent.
	Then the following assertions are valid:
	\begin{itemize}		
		\item[(a)] There exist complete orthonormal families 
		$(a_n)_{n \in \N}$ and $(b_n)_{n \in \N}$ of self-adjoint elements in $\si_2(\hi )$ and a sequence of positive scalars $(\lambda_n)_{n\in\N}$ with $\lambda_{n}\to\infty$ as $n\to\infty$ (if $\hi$ is infinite dimensional)
		such that 
		\begin{equation}\label{genform2}
		\widetilde{L}= I+\sum_{n=1}^\infty  \lambda_{n}|a_n \ra \la b_n |
		\end{equation}
		where the sum converges in the SOT (if it is infinite), i.e. for every $x \in D(\widetilde{L})$ we have that $\widetilde{L}(x)= x + 
		\sum_{n=1}^\infty \lambda_n \la b_n , x \ra a_n$ with $\sum_n |\lambda_n \la b_n, x \ra |^2 < \infty$.
		
		\item[(b)] By the Spectral Theorem there is an orthonormal
		basis $(h_n)_{n \in \N}$ of $\hi$ which consists of eigenvectors 
		of $\rho$. Let $L_{(h_n)}$ denote the generator of 
		$(T_t)_{t \geq 0}$ with respect to $(h_n)_{n \in \N}$. Then 
		\begin{equation} \label{compactresolventL}
		L_{(h_n)}
		=I+\sum_{n=1}^\infty \lambda_n|i_{\rho^{(-1)}}(a_n) \ra \la i_\rho(b_n) |
		\end{equation} where the sum converges in the SOT (if it is infinite). We note that \linebreak $|i_{\rho^{(-1)}}(a_n) \ra\la i_\rho(b_n)|$ has domain $\bi(\hi)$ for all $n$, since \[|i_{\rho^{(-1)}}(a_n) \ra\la i_\rho(b_n)|x=\la i_\rho(b_n),x\ra i_{\rho^{(-1)}}(a_n)=\la b_n,i_\rho(x)\ra i_{\rho^{(-1)}}(a_n)\] and $b_n,i_\rho(x)\in \si_2(\hi)$. Explicitly, for every $x \in D(L_{(h_n)})$ and every $i,j\in \N$ we have that \[\la h_i,[L_{(h_n)}(x)]h_j\ra = \la h_i,xh_j\ra+\sum_{n=1}^\infty \lambda_n \la b_n,i_\rho(x)\ra \la h_i, i_{\rho^{(-1)}}(a_n) h_j\ra. \]
		
		\item[(c)] We have $$I=-\sum_{n=1}^\infty \lambda_n\la b_n,\rho^{1/2}\ra i_{\rho^{(-1)}}(a_n),$$ where the sum converges in the SOT (if it is infinite).	
	
		\item[(d)] For all $e \in \hi$ with 
		$\| e \| =1$ we have that the operator 
		$\widetilde{L}_{\otimes,e} : \si_2 (\hi )\otimes \hi  \to \si_2 (\hi )\otimes \hi $ 
		is positive, where the operator $\widetilde{L}_{\otimes,e}$ is defined by 
		\begin{equation} \label{E:DLe2}
		\widetilde{L}_{\otimes,e}= (Id +T_e^*)
		\left( \sum_{n=1}^\infty \lambda_n M_{b_n} \otimes a_n\right) 
		(Id+T_e)
		\end{equation}
		where $Id$ stands for the identity operator on $\si_2 (\hi ) \otimes \hi$ and the sum converges in the SOT (if it is infinite).
	\end{itemize}
\end{theorem}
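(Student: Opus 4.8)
The plan is to prove the four assertions in order, each resting on its predecessors together with two facts already available: $\widetilde{L}$ generates a strongly continuous contraction semigroup on the Hilbert space $\si_2(\hi)$, and by Corollary~\ref{corollary2} it is conditionally completely positive and satisfies $\widetilde{L}(x^{*})=\widetilde{L}(x)^{*}$.

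For (a) I would reduce everything to a singular value decomposition of $C^{-1}$, where $C:=\widetilde{L}-I$. Because $\widetilde{L}$ generates a contraction semigroup, its resolvent set contains $\{\lambda:\re\lambda>0\}$, so $1$ is a resolvent point and $C$ is a closed densely defined bijection onto $\si_2(\hi)$ whose inverse $C^{-1}$ is bounded, injective, of dense range $D(\widetilde{L})$, and compact by the compact-resolvent hypothesis. The identity $\widetilde{L}(x^{*})=\widetilde{L}(x)^{*}$ says that $C$, and hence $C^{-1}$, commutes with the conjugation $J(x)=x^{*}$ on $\si_2(\hi)$, which satisfies $\la Jx,Jy\ra_{\si_2(\hi)}=\overline{\la x,y\ra_{\si_2(\hi)}}$. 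Therefore $(C^{-1})^{*}C^{-1}$ commutes with $J$, so $|C^{-1}|$ commutes with $J$, and by uniqueness of the polar decomposition $C^{-1}=U|C^{-1}|$ the partial isometry $U$, which is unitary since $C^{-1}$ is injective with dense range, also commutes with $J$. Since $|C^{-1}|$ is positive, injective, and compact, it has an orthonormal eigenbasis of $\si_2(\hi)$ with positive eigenvalues $\mu_n\to0$; as it commutes with the conjugation $J$ and its eigenspaces are finite dimensional, I can select this eigenbasis $(b_n)$ to consist of $J$-fixed, i.e.\ self-adjoint, vectors. Putting $a_n:=Ub_n$ yields a second orthonormal basis of self-adjoint vectors (self-adjoint because $U$ commutes with $J$), and with $\lambda_n:=\mu_n^{-1}\to\infty$ one obtains $\widetilde{L}=I+\sum_n\lambda_n|a_n\ra\la b_n|$ (after relabeling $a_n\leftrightarrow b_n$ to match the statement), the series converging in the strong operator topology on $D(\widetilde{L})=\Range(C^{-1})$. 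I expect this to be the main obstacle, the delicate point being the simultaneous choice of self-adjoint left and right singular vectors, which is exactly what the commutation with $J$ provides.

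Parts (b) and (c) are then formal consequences. For (b) I would use Theorem~\ref{theorem4}: $x\in D(L_{(h_n)})$ precisely when $i_\rho(x)\in D(\widetilde{L})$, in which case \eqref{LtotildeL} gives $L_{(h_n)}(x)=i_{\rho^{(-1)}}(\widetilde{L}(i_\rho(x)))$. Substituting the formula from (a), using that $i_{\rho^{(-1)}}$ is linear and inverts $i_\rho$ (so $i_{\rho^{(-1)}}(i_\rho(x))=x$), pulling the scalars out of $i_{\rho^{(-1)}}$, and rewriting $\la b_n,i_\rho(x)\ra_{\si_2(\hi)}=\la i_\rho(b_n),x\ra$ by the self-adjointness of $i_\rho$ and of $b_n$ yields precisely \eqref{compactresolventL}, with the domain and convergence claims handled exactly as in the discussion following the statement. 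Part (c) is then the special case $x=I$: since the semigroup is identity preserving, $I\in D(L)\subseteq D(L_{(h_n)})$ with $L_{(h_n)}(I)=L(I)=0$, and evaluating \eqref{compactresolventL} at $I$ together with $\la i_\rho(b_n),I\ra=\Tr(\rho^{1/4}b_n\rho^{1/4})=\la b_n,\rho^{1/2}\ra_{\si_2(\hi)}$ gives $0=I+\sum_n\lambda_n\la b_n,\rho^{1/2}\ra_{\si_2(\hi)}\,i_{\rho^{(-1)}}(a_n)$, which is the claim; equivalently, one applies $i_{\rho^{(-1)}}$ to the kernel relation $\widetilde{L}(\rho^{1/2})=\widetilde{L}(i_\rho(I))=i_\rho(L(I))=0$.

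For (d) the strategy is one algebraic identity followed by conditional complete positivity. Write $R:=\sum_n\lambda_n M_{b_n}\otimes a_n$ and let $\mathrm{ev}:\si_2(\hi)\otimes\hi\to\hi$ be the evaluation map $\sum_k y_k\otimes g_k\mapsto\sum_k y_k(g_k)$. Using $\widetilde{L}(y_k^{*}y_l)=y_k^{*}y_l+\sum_n\lambda_n\la b_n,y_k^{*}y_l\ra_{\si_2(\hi)}a_n$ together with $b_n=b_n^{*}$ and cyclicity of the trace, a direct computation gives, for $\eta=\sum_k y_k\otimes g_k$, \[\la\eta,R\eta\ra=\sum_{k,l}\la g_k,\widetilde{L}(y_k^{*}y_l)g_l\ra-\|\mathrm{ev}(\eta)\|^{2}.\] Since $\la\xi,\widetilde{L}_{\otimes,e}\xi\ra=\la(Id+T_e)\xi,R(Id+T_e)\xi\ra$ depends only on $\eta=(Id+T_e)\xi$, it remains to use that $T_e$ is built precisely so that the corrected vector $(Id+T_e)\xi$ lies in $\ker(\mathrm{ev})$, i.e.\ satisfies the cancellation condition $\sum_k y_k(g_k)=0$ of conditional complete positivity. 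Granting this, the displayed identity reduces $\la\xi,\widetilde{L}_{\otimes,e}\xi\ra$ to the conditional-complete-positivity form of $\widetilde{L}$ evaluated on a vector annihilated by $\mathrm{ev}$, which is nonnegative by Corollary~\ref{corollary2}; hence $\widetilde{L}_{\otimes,e}\geq0$. The two points needing care are the verification that the $T_e$-correction indeed enforces $\mathrm{ev}((Id+T_e)\xi)=0$, and the treatment of the strong-operator convergence of the series defining $R$.
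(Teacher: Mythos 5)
Your argument for parts (a)--(c) is correct and is essentially the paper's own argument. For (a) the paper likewise reduces everything to a singular value decomposition of the compact operator $K=(\widetilde{L}-I)^{-1}$ with self-adjoint left and right singular vectors: it shows $K$ is star-preserving (Corollary~\ref{corollary2} together with Lemma~\ref{Lem:selfadjoint matrices}), applies Lemma~\ref{Lem:selfadjoint matrices2} to the compact, self-adjoint, star-preserving operator $K^\dag K$ to obtain a self-adjoint orthonormal eigenbasis $(v_n)$, sets $\sigma_n u_n=Kv_n$ (so the $u_n$ are self-adjoint because $K$ is star-preserving), and inverts $K=\sum_n\sigma_n|u_n\ra\la v_n|$ via the Moore--Penrose inverse. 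Your polar-decomposition packaging ($C^{-1}=U|C^{-1}|$ with $U$ and $|C^{-1}|$ commuting with the conjugation $J$) is the same idea in different clothing; the step you flag as delicate --- extracting a self-adjoint orthonormal basis from each finite-dimensional $J$-invariant eigenspace --- is exactly the Gram--Schmidt argument of Lemma~\ref{Lem:selfadjoint matrices2}, and you would need to reproduce it. Parts (b) and (c) are handled exactly as in the paper. One caveat for (b): Theorem~\ref{theorem4} proves only the implication that $i_\rho(x)\in D(\widetilde{L})$ implies $x\in D(L_{(h_n)})$ together with \eqref{LtotildeL}; your ``precisely when'' asserts the converse as well, which is proved nowhere (the paper's proof of (b) is subject to the same caveat, so this does not distinguish the two arguments).

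In part (d), however, your key verification fails as stated, for a sign reason. With the paper's definition of $T_e$ one has, for $\xi=\sum_{i=1}^k x_i\otimes h_i$, that $T_e\xi=\sum_i|x_i(h_i)\ra\la e|\otimes e$, hence $\mathrm{ev}(T_e\xi)=\sum_i \la e,e\ra\, x_i(h_i)=\mathrm{ev}(\xi)$; therefore $\mathrm{ev}\bigl((Id-T_e)\xi\bigr)=0$ while $\mathrm{ev}\bigl((Id+T_e)\xi\bigr)=2\,\mathrm{ev}(\xi)$. So it is $Id-T_e$, not $Id+T_e$, that pushes vectors into $\ker(\mathrm{ev})$, and the cancellation condition you need in order to invoke conditional complete positivity does not hold for $(Id+T_e)\xi$. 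Moreover, the $+$ sign cannot be rescued: $T_e$ is idempotent, so $Id+T_e$ is a bijection of the algebraic tensor product (with inverse $Id-\tfrac12T_e$), and positivity of the form $\xi\mapsto\la(Id+T_e)\xi,\,R\,(Id+T_e)\xi\ra$ with $R=\sum_n\lambda_n M_{b_n}\otimes a_n$ would force $\la\eta,R\eta\ra\geq0$ for \emph{every} $\eta$, a strictly stronger property than conditional complete positivity which fails for generic QMS generators. This defect is inherited from the paper itself: Equation~\eqref{E:DLe2} carries $+$ signs, but the proof of Lemma~\ref{Lem:Mpositive} --- which is exactly your argument: correct the vector into $\ker(\mathrm{ev})$, apply your displayed identity relating $\la\eta,R\eta\ra$ to $\sum_{k,l}\la g_k,\widetilde{L}(y_k^*y_l)g_l\ra$, and invoke Corollary~\ref{corollary2} --- works throughout with $(Id-T_e)$. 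So your strategy and algebraic identity for (d) are right and coincide with the paper's; to make the proof correct you must replace $Id+T_e$ by $Id-T_e$ (equivalently, build the minus sign into the definition of $T_e$), and the same correction is needed in the statement being proved.
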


We note that the sum in \eqref{genform2} is finite if and only if $\hi$ is finite dimensional. Indeed, if $\widetilde{L}$ is bounded with compact resolvent then $\hi$ is finite dimensional, as remarked in the preamble of this subsection. The proof of Theorem~\ref{Thm:final2} is at the end of this subsection, after the following three results:

\begin{lemma} \label{Lem:selfadjoint matrices} Let $\hi$ be a separable Hilbert space and $A$ be an invertible linear operator on $S_2(\hi)$ with dense domain which is closed under adjoints. If $A$ satisfies $A(a^*)=(A(a))^\ast$ for all $a\in~D(A)$, then $D(A^\dag)$ and $D(A^{-1})$ are closed under adjoints, $A^\dag(b^\ast)=(A^\dag(b))^\ast$ for all $b\in D(A^\dag)$, and $A^{-1}(c^\ast)=(A^{-1}(c))^\ast$ for all $c\in D(A^{-1})$.\end{lemma}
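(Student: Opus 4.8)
The key observation is that the adjoint operation $J\colon \si_2(\hi)\to\si_2(\hi)$, $Ja=a^*$, is a conjugate-linear isometric involution on the Hilbert space $\si_2(\hi)$; concretely $J^2=\mathrm{id}$ and $\la Ja,Jb\ra_{\si_2(\hi)}=\la b,a\ra_{\si_2(\hi)}$ for all $a,b$, as one checks from $\la a,b\ra_{\si_2(\hi)}=\Tr(a^*b)$ together with cyclicity of the trace. In this language the two hypotheses on $A$ read: $D(A)$ is $J$-invariant, and $A$ intertwines $J$, that is $AJ=JA$ on $D(A)$. The entire lemma then amounts to the assertion that this intertwining relation, together with $J$-invariance of the domain, is inherited by $A^\dagger$ and by $A^{-1}$. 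So the plan is to transport the identity $AJ=JA$ to the adjoint and the inverse purely formally, using nothing beyond the defining relations of $A^\dagger$ and $A^{-1}$ and the anti-unitarity of $J$.

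I would dispatch the inverse first, as it is immediate. Since $A$ is invertible we have $D(A^{-1})=\Range(A)$. Given $c\in\Range(A)$, write $c=Aa$ with $a=A^{-1}c\in D(A)$. Because $D(A)$ is $J$-invariant and $AJ=JA$, we get $Jc=JAa=AJa$ with $Ja\in D(A)$; hence $Jc\in\Range(A)=D(A^{-1})$ and $A^{-1}(Jc)=Ja=J(A^{-1}c)$. As $c$ was arbitrary and $J^2=\mathrm{id}$, this shows $D(A^{-1})$ is closed under adjoints and $A^{-1}(c^*)=(A^{-1}c)^*$.

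For the adjoint, fix $b\in D(A^\dagger)$ and put $c=A^\dagger b$, so that $\la Aa,b\ra_{\si_2(\hi)}=\la a,c\ra_{\si_2(\hi)}$ for every $a\in D(A)$. To test $Jb$ against $A^\dagger$ I would evaluate $\la Aa',Jb\ra$ for an arbitrary $a'\in D(A)$: writing $a'=J\tilde a$ with $\tilde a=Ja'\in D(A)$ (legitimate since $D(A)$ is $J$-invariant) and using $Aa'=AJ\tilde a=JA\tilde a$, the anti-unitarity $\la Jx,Jy\ra=\la y,x\ra$ turns $\la JA\tilde a,Jb\ra$ into $\la b,A\tilde a\ra$, which equals $\overline{\la A\tilde a,b\ra}=\overline{\la\tilde a,c\ra}=\la c,\tilde a\ra=\la J\tilde a,Jc\ra=\la a',Jc\ra$. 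Thus $a'\mapsto\la Aa',Jb\ra$ is represented by $Jc$, so $Jb\in D(A^\dagger)$ and $A^\dagger(Jb)=Jc=(A^\dagger b)^*$; again $J^2=\mathrm{id}$ gives that $D(A^\dagger)$ is closed under adjoints.

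The only place demanding care is this last computation, since $J$ is conjugate-linear: one must keep track of the complex conjugation incurred each time the inner product is traversed and ensure the two applications of anti-unitarity are oriented correctly. Everything else is bookkeeping; in particular neither closedness of $A$ nor separability of $\hi$ is used beyond guaranteeing that $A^\dagger$ is well defined.
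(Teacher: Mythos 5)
Your proof is correct and takes essentially the same approach as the paper's: both arguments transport the star-preserving property to $A^{-1}$ by writing $c=A(a)$ and applying $A(a^*)=c^*$, and to $A^\dag$ via the anti-unitarity $\la x^*,y^*\ra_{\si_2(\hi)}=\la y,x\ra_{\si_2(\hi)}$ of the adjoint map together with $J$-invariance of $D(A)$ — your chain $\la Aa',Jb\ra=\la b,A\tilde a\ra=\la a',Jc\ra$ is exactly the paper's computation $\la A(a),b^*\ra=\la b,A(a^*)\ra=\la A^\dag(b),a^*\ra=\la a,(A^\dag(b))^*\ra$ rewritten in terms of the involution $J$. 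The only cosmetic difference is that the paper first extracts the bound $|\la A(a),b^*\ra|\le\|a\|\,\|(A^\dag(b))^*\|$ to conclude $b^*\in D(A^\dag)$ and then reruns the identity using density of $D(A)$, whereas you obtain domain membership and the formula $A^\dag(Jb)=Jc$ in a single stroke.
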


\begin{proof}
	Let $a\in D(A)$ and $b\in D(A^\dag)$. Then \begin{align*}|\la A(a),b^*\ra| &=|\la (A(a^*))^*,b^*\ra|=|\la b,A(a^*)\ra| \\ &=|\la A^\dag(b),a^*\ra| =|\la a,(A^\dag(b))^*\ra|\leq ||a|| ||(A^\dag(b))^*||,\end{align*} and so $b^\ast\in D(A^\dag)$ by definition. As before, \[\la a, A^\dag(b^*)\ra=\la A(a),b^*\ra=\la a,(A^\dag(b))^*\ra,\] and since $D(A)$ is dense this implies $A^\dag(b^\ast)=(A^\dag(b))^\ast$ for all $b\in D(A^\dag)$. Further, for every $c\in D(A^{-1})$ there exists an $a\in D(A)$ such that $A(a)=c$. Since $A$ is star-preserving we have that $A(a^*)=c^*$. Then, by definition, $(A^{-1}(c))^*=a^*=A^{-1}(c^*)$. 
\qed\end{proof}

\begin{lemma} \label{Lem:selfadjoint matrices2} Let $\hi$ be a Hilbert space and $A$ be a compact and self-adjoint linear operator on $S_2(\hi)$. Then $A$ satisfies $A(a^*)=(A(a))^\ast$ for all $a\in~S_2(\hi)$ if and only if \[A=\sum_{n=1}^\infty\lambda_{n}|x_n\ra\la x_n|\] with $(\lambda_n)_{n=1}^\infty\subseteq \R$ and $(x_n)_{n=1}^\infty$ an orthonormal basis of $S_2(\hi)$ consisting of self-adjoint operators. \end{lemma}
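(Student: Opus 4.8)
The plan is to recast the hypothesis $A(a^*)=(A(a))^*$ as a commutation relation with the conjugation $J\colon\si_2(\hi)\to\si_2(\hi)$ defined by $J(a)=a^*$. This $J$ is a conjugate-linear isometric involution, and the stated identity says precisely that $AJ=JA$. The reverse implication is then a direct verification: if $A=\sum_n\lambda_n|x_n\ra\la x_n|$ with $\lambda_n\in\R$ and each $x_n=x_n^*$, then cyclicity of the trace together with $x_n^*=x_n$ gives $\la x_n,a^*\ra=\overline{\la x_n,a\ra}$ for every $a\in\si_2(\hi)$, whence $A(a^*)=\sum_n\lambda_n\overline{\la x_n,a\ra}\,x_n=(A(a))^*$. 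So the substance lies in the forward direction.

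For the forward direction I would first diagonalize $A$. Since $A$ is compact and self-adjoint on the complex Hilbert space $\si_2(\hi)$, the spectral theorem supplies an orthogonal eigenspace decomposition with real eigenvalues, the nonzero eigenvalues forming a (possibly finite) sequence of finite multiplicity accumulating only at $0$. Writing $E_\mu=\ker(A-\mu I)$, the commutation $AJ=JA$ together with the reality of $\mu$ forces each eigenspace to be $J$-invariant: if $Aa=\mu a$ then $A(a^*)=(Aa)^*=\overline{\mu}\,a^*=\mu a^*$, so $a^*\in E_\mu$. Thus $J$ restricts to a conjugation on each $E_\mu$, and $E_\mu$ is spanned, as a complex space, by its self-adjoint elements via the decomposition $a=\frac{a+a^*}{2}+i\,\frac{a-a^*}{2i}$, in which both $\frac{a+a^*}{2}$ and $\frac{a-a^*}{2i}$ are self-adjoint and lie in $E_\mu$.

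The heart of the argument is to produce inside each $E_\mu$ an orthonormal basis of self-adjoint elements. Let $E_\mu^J$ denote the real subspace of self-adjoint members of $E_\mu$; it is closed, and the Hilbert-Schmidt inner product restricted to $E_\mu^J$ is real-valued (again by cyclicity of the trace). I would then show that any orthonormal basis $(e_k)$ of the real Hilbert space $E_\mu^J$ is simultaneously an orthonormal basis of the complex Hilbert space $E_\mu$: orthonormality transfers because the complex and real inner products agree on self-adjoint elements, and completeness over $\C$ follows from the $p+iq$ decomposition above, which expands any $a\in E_\mu$ as $\sum_k(\alpha_k+i\beta_k)e_k$ with $\alpha_k,\beta_k\in\R$. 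Choosing such a self-adjoint orthonormal basis in each $E_\mu$ (a finite-dimensional linear-algebra step for $\mu\neq 0$, and, for the possibly infinite-dimensional kernel $E_0$, the existence of an orthonormal basis of the real Hilbert space $E_0^J$) and collecting them yields an orthonormal basis $(x_n)$ of $\si_2(\hi)$ consisting of self-adjoint eigenvectors, with real eigenvalues $\lambda_n$, so that $A=\sum_n\lambda_n|x_n\ra\la x_n|$. Separability of $\hi$, and hence of $\si_2(\hi)$, ensures the indexing is countable as stated.

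I expect the main obstacle to be precisely the transfer from a real orthonormal basis of $E_\mu^J$ to a complex orthonormal basis of $E_\mu$ — in particular the completeness over $\C$, which must be extracted from the self-adjoint/anti-self-adjoint splitting rather than from dimension counting — together with the careful treatment of the zero eigenspace $E_0$, where the finite-dimensional argument is unavailable and one must instead invoke the existence of an orthonormal basis for the real Hilbert space $E_0^J$.
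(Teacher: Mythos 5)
Your proof is correct, and it diverges from the paper's at exactly the step that carries the weight. Both arguments share the same skeleton: diagonalize $A$ by the spectral theorem, note the eigenvalues are real, and use $A(a^*)=(A(a))^*$ to conclude that each eigenspace $E_\mu$ is closed under $a\mapsto a^*$. The difference is how one manufactures a self-adjoint orthonormal basis inside each eigenspace. The paper does it constructively: starting from a given complex orthonormal basis $(y_j)_{j=1}^N$ of $E_\mu$, it forms the $2N$ self-adjoint spanning elements $y_j+y_j^*$ and $i(y_j-y_j^*)$, runs Gram--Schmidt, and observes that the relevant inner products are all real, so the Gram--Schmidt outputs remain self-adjoint; $N$ of the $2N$ outputs vanish and the surviving $N$, after normalization, form the desired basis. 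You instead recast the hypothesis as commutation with the conjugation $J(a)=a^*$ and pass to the real Hilbert space $E_\mu^J$ of self-adjoint elements of $E_\mu$, showing that any orthonormal basis of $E_\mu^J$ over $\R$ is automatically an orthonormal basis of $E_\mu$ over $\C$: orthonormality transfers because the Hilbert--Schmidt inner product is real-valued on pairs of self-adjoint operators, and completeness follows from the Cartesian decomposition $a=\frac{a+a^*}{2}+i\,\frac{a-a^*}{2i}$. Your route buys two things: it dispenses with the Gram--Schmidt bookkeeping (in particular the counting claim that exactly $N$ of the $2N$ vectors survive, which is intrinsically a finite-dimensional statement), and it treats the possibly infinite-dimensional kernel $E_0$ on the same footing as the finite-dimensional eigenspaces for $\mu\neq 0$ --- a case the paper's finitely indexed Gram--Schmidt argument passes over silently. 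You also verify the easy converse implication of the ``if and only if,'' which the paper's proof omits. What the paper's approach buys in exchange is explicitness: it converts a given eigenbasis into a self-adjoint one by a concrete algorithm rather than by an appeal to the existence of a basis in an abstract real Hilbert space.
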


\begin{proof}
	If $A$ is compact and self-adjoint, then the Spectral Theorem implies there is an eigensystem decomposition \[A=\sum_{n=1}^\infty \lambda_n |y_n\ra\la y_n|,\] with $(\lambda_n)_{n=1}^\infty\subseteq \R$ and $(y_n)_{n=1}^\infty$ an orthonormal basis of $S_2(\hi)$. Because $A$ is self-adjoint and star-preserving, we have that $A(y_n)=\lambda_n y_n$ implies $A(y_n^\ast)=\lambda_n y_n^\ast$. Thus, every eigenspace of $A$ is self-adjoint. Let $E$ be an eigenspace of $A$ corresponding to a fixed eigenvalues, and consider the orthonormal basis $(y_{n_j})_{j=1}^N\subseteq(y_n)_{n=1}^\infty$ of $E$. Because $E$ is self-adjoint, from $\la y_{n_j},y_{n_k}\ra=\la y_{n_j}^*,y_{n_k}^*\ra=\delta_{jk}$ it follows that $(y_{n_j}^*)_{j=1}^N$ is an orthonormal basis of $E$. Define self-adjoint operators $a_j=y_{n_j}+y_{n_j}^*$ and $a_{N+j}=i(y_{n_j}-y_{n_j}^*)$ for each $1\leq j\leq N$ so that $E=\Span(a_j)_{j=1}^{2N}$. We follow the Gram-Schmidt process and set $b_1=a_1$ and recursively define \[b_k=a_k-\sum_{j=1}^{k-1}\frac{\la b_j,a_k\ra}{\la b_j,b_j\ra}b_j\] to produce a sequence of $N$ many orthogonal operators which span $E$ (the remaining $N$ many operators produced by the Gram-Schmidt process become zero). Straight forward calculation reveals that $\la a_j,a_k\ra$ is real for every $1\leq j,k\leq 2N$, and hence $\la b_j,a_k\ra$ is real for every $1\leq j,k\leq 2N$. Each $b_k$ is thus self-adjoint as a real combination of self-adjoint operators, and the set $(b_k)_{k=1}^N$ can therefore be normalized to a set of self-adjoint orthonormal operators $(x_j)_{j=1}^N$ which span $E$. Replacing $y_n$ with $x_n$ in the original eigensystem decomposition for each eigenspace $E$, we have \[A=\sum_{n=1}^\infty \lambda_n |x_n\ra\la x_n|,\] as desired.
\qed\end{proof}

\begin{lemma} \label{Lem:Mpositive}
	Let $\hi$ be a Hilbert space and $\widetilde{L}$ be a bounded linear operator 
	on $\si_2 (\hi )$ which has the form \eqref{genform2}. Then $\widetilde{L}$ is conditionally completely positive if and only if
	for some (equivalently all) normalized vector $e\in \hi$, 
	the operator $\widetilde{L}_{\otimes,e}$ defined on the Hilbert space $\si_2(\hi ) \otimes \hi$,
	by Equation~(\ref{E:DLe2}), is positive.
\end{lemma}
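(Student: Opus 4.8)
The plan is to match each of the two conditions to a single, manifestly $e$-independent statement about the self-adjoint operator
\[
\mathcal{M}:=\sum_{n=1}^\infty \lambda_n\, M_{b_n}\otimes a_n \quad \textrm{on } \si_2(\hi)\otimes\hi,
\]
in terms of which $\widetilde{L}_{\otimes,e}=(Id+T_e^*)\,\mathcal{M}\,(Id+T_e)$. First I would record that $\mathcal{M}$ is self-adjoint: since each $a_n$ and $b_n$ is self-adjoint, $M_{b_n}^*=M_{b_n}$ and $a_n^*=a_n$, so $(M_{b_n}\otimes a_n)^*=M_{b_n}\otimes a_n$; this is the only place where the self-adjointness built into the form (\ref{genform2}) enters, and it is what makes ``positivity of $\widetilde{L}_{\otimes,e}$'' a meaningful condition. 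For a finite $\xi=\sum_i x_i\otimes h_i$ with $x_i\in\si_2(\hi)$ I would then establish the basic identity
\[
\sum_{i,j}\la h_i,\widetilde{L}(x_i^*x_j)h_j\ra=\Big\|\sum_i x_i(h_i)\Big\|^2+\la \xi,\mathcal{M}\xi\ra,
\]
obtained by splitting $\widetilde{L}=I+\sum_n\lambda_n|a_n\ra\la b_n|$: the $I$-part produces $\sum_{i,j}\la x_i(h_i),x_j(h_j)\ra=\|\sum_i x_i(h_i)\|^2$, while $\la b_n,x_i^*x_j\ra_{\si_2(\hi)}=\la x_i,x_j b_n\ra_{\si_2(\hi)}$ turns the remainder into $\sum_n\lambda_n\la\xi,(M_{b_n}\otimes a_n)\xi\ra$.

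Writing $V\big(\sum_i x_i\otimes h_i\big)=\sum_i x_i(h_i)$ for the (unbounded) evaluation map, the constraint $\sum_i x_i(h_i)=0$ in the definition of conditional complete positivity is exactly $\xi\in\Ni(V)$, on which the first term above vanishes. Hence the identity shows, with no reference to $e$, that $\widetilde{L}$ is conditionally completely positive if and only if $\la\xi,\mathcal{M}\xi\ra\ge 0$ for all $\xi\in\Ni(V)$. One direction of the lemma is now immediate: if $\widetilde{L}_{\otimes,e}$ is positive, then for $\xi\in\Ni(V)$ we have $T_e\xi=|V\xi\ra\la e|\otimes e=0$, so $\la\xi,\widetilde{L}_{\otimes,e}\xi\ra=\la\xi,\mathcal{M}\xi\ra\ge 0$, giving conditional complete positivity. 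For the converse I would show that the conjugation by $Id+T_e$, together with the auxiliary vector $e$, is precisely the device that confines the positivity test of $\mathcal{M}$ to the constraint subspace $\Ni(V)$: concretely, I would compute $\la\xi,\widetilde{L}_{\otimes,e}\xi\ra=\la(Id+T_e)\xi,\mathcal{M}(Id+T_e)\xi\ra$ on the natural domain of $\widetilde{L}_{\otimes,e}$ and identify it, via the identity above applied to the $e$-augmented family, with a conditionally completely positive sum whose evaluation vanishes, so that the assumed conditional complete positivity forces it to be nonnegative. The ``some, equivalently all'' statement for $e$ is then automatic, since the pivot condition ``$\la\xi,\mathcal{M}\xi\ra\ge 0$ on $\Ni(V)$'' does not mention $e$.

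The main obstacle is analytic rather than algebraic. Because the evaluation map $V$ is unbounded, the operators $T_e$ and $T_e^*$ are unbounded with delicate domains, and, crucially, $\mathcal{M}$ itself need not be bounded even though $\widetilde{L}$ is (for instance $\widetilde{L}=2\,Id$ arises from $a_n=b_n$ and $\lambda_n=1$, yet the corresponding $\mathcal{M}$ is unbounded when $\hi$ is infinite dimensional). Consequently the equivalence is \emph{not} a soft consequence of the density of $\Ni(V)$ in $\si_2(\hi)\otimes\hi$ together with continuity of $\mathcal{M}$; rather, one must carry out the explicit quadratic-form computation of $\widetilde{L}_{\otimes,e}$ and track the domains carefully to verify that its positivity captures exactly the \emph{constrained} positivity of $\mathcal{M}$ (equivalently, conditional complete positivity) and not the strictly stronger unconstrained positivity of $\mathcal{M}$. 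Establishing this matching, namely that the $T_e$-dressing reduces the positivity test to $\Ni(V)$, is the technical heart of the argument.
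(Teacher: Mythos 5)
Your reduction to the $e$-independent pivot is, in substance, the paper's own argument: your quadratic-form identity
\[
\sum_{i,j}\la h_i,\widetilde{L}(x_i^*x_j)h_j\ra=\Big\|\sum_i x_i(h_i)\Big\|^2+\la \xi,\mathcal{M}\xi\ra,\qquad \xi=\sum_i x_i\otimes h_i,\quad \mathcal{M}=\sum_n\lambda_n M_{b_n}\otimes a_n,
\]
is exactly the computation the paper performs in both directions, and your first implication (positivity of $\widetilde{L}_{\otimes,e}$ implies conditional complete positivity, via $T_e\xi=0$ for $\xi\in\Ni(V)$, where $V(\sum_i x_i\otimes h_i)=\sum_i x_i(h_i)$) coincides with the paper's. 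The genuine gap is the converse, which you defer as ``the technical heart'': as you have set it up, it cannot be completed, because the step you describe is false. On finite sums of elementary tensors one has $T_e^2=T_e$ and $VT_e=V$, hence $V\bigl((Id+T_e)\xi\bigr)=2V\xi$; so $(Id+T_e)\xi$ is \emph{not} an ``$e$-augmented family whose evaluation vanishes,'' and the conditional complete positivity hypothesis cannot be applied to it. Worse, $Id+T_e$ is invertible, with $(Id+T_e)^{-1}=Id-\tfrac12 T_e$, so positivity of $(Id+T_e^*)\,\mathcal{M}\,(Id+T_e)$ is equivalent to the \emph{unconstrained} positivity of $\mathcal{M}$ --- precisely the strictly stronger condition (complete positivity of the non-identity part of $\widetilde{L}$) that you warn about in your last paragraph. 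Consequently, with the $+$ signs of Equation~(\ref{E:DLe2}) the asserted equivalence fails whenever $\widetilde{L}-I$ is conditionally completely positive but not completely positive, which is the typical situation for a generator; no argument can close this converse.

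What makes the lemma true --- and what the paper's proof actually uses, in silent contradiction with its own display (\ref{E:DLe2}) --- is the operator $(Id-T_e^*)\,\mathcal{M}\,(Id-T_e)$. Since $T_e$ is idempotent with $\Ni(T_e)=\Ni(V)$ and $VT_e=V$, the operator $Id-T_e$ is an idempotent whose range is exactly $\Ni(V)$, so positivity of $(Id-T_e^*)\mathcal{M}(Id-T_e)$ says precisely that $\la\eta,\mathcal{M}\eta\ra\ge 0$ for all $\eta\in\Ni(V)$, i.e.\ your pivot condition. Concretely, for $w=\sum_{i=1}^k y_i\otimes h_i'$ the vector $(Id-T_e)w$ is exactly the paper's augmented family: append $y_{k+1}=|v\ra\la e|$ and $h_{k+1}'=e$ with $v=-\sum_i y_i(h_i')$, whose evaluation does vanish. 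Had you pushed your deferred step through, you would have been forced into this sign correction, and with it your outline becomes verbatim the paper's proof (the $+$ in (\ref{E:DLe2}) is evidently a typo). Your secondary point --- that $\mathcal{M}$ need not be bounded, so positivity must be interpreted as a statement about the quadratic form on finite sums of elementary tensors --- is legitimate, and the paper glosses over it by checking positivity only on that dense subspace; but it is not the obstruction. The obstruction is the sign.
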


\begin{proof} First note that $I+A$ is conditionally completely positive if and only if $A$ is (as is easily verified), so for simplicity we may assume instead that $\widetilde{L}= \sum_{n=1}^\infty  \lambda_{n}|a_n \ra \la b_n |$.

	We will start with the forward direction and suppose $\widetilde{L}$ is conditionally completely positive.  Let $e \in \hi$ with $\| e \| =1$.  Since  
	$W= \{ \sum_{i=1}^k y_i \otimes h_i' : y_i \in \mathcal{S}_2(\hi), 
	h_i' \in \hi \}$ is dense in
	$\mathcal{S}_2(\hi) \otimes \hi$, in order to verify that 
	$\widetilde{L}_{\otimes,e} \geq 0$ it is enough to consider an element 
	$w=\sum_{i=1}^k y_i \otimes h_i' \in W$ and verify that 
	$\la w, \widetilde{L}_{\otimes,e} w \ra_\otimes \geq 0$, where $\la \cdot , \cdot \ra_\otimes$ 
	will denote the inner product of $\si_2(\hi) \otimes \hi$.  (The 
	reason that we chose $h_i'$ to denote a generic element of $\hi$
	is because we have used $h_n$ to denote the orthonormal 
	eigenvectors of $\rho$ in the statement of Theorem~\ref{Thm:final2}).
	We will denote the inner product
	of $\hi$ by $\la \cdot , \cdot \ra_\hi$. Fix 
	$w=\sum_{i=1}^k y_i \otimes h_i' \in W$ and let 
	$v=-\sum_{i=1}^k y_i(h_i')$.  Define 
	$y_{k+1}=|v \ra \la e|$ and $h_{k+1}'=e$.  Then 
	$\sum_{i=1}^{k+1}y_i(h_i')=0$ and, since 
	$\widetilde{L}$ is conditionally completely positive, we have that
	\begin{align*}
	0  \leq & \sum_{i,j=1}^{k+1} \la h_i' , \widetilde{L}(y_i^*y_j)h_j' \ra_\hi \\  
	= & \sum_{i,j=1}^{k+1} 
	\sum_{n=1}^\infty \lambda_n\Tr(y_i^*y_jb_n) \la h_i' ,a_n(h_j') \ra_\hi\\ 
	= & \sum_{i,j=1}^{k+1} 
	\sum_{n=1}^\infty\lambda_n\la y_i \otimes h_i' , y_jb_n \otimes a_n(h_j') \ra_\otimes \\ 
	= & \sum_{i,j=1}^{k+1} \sum_{n=1}^\infty 
	\left\la 
	y_i \otimes h_i' , \lambda_nM_{b_n} \otimes a_n (y_j \otimes h_j') 
	\right\ra_\otimes
	\\ 
	= & \left\la 
	\sum_{i=1}^{k+1} y_i \otimes h_i' , 
	\left(\sum_{n=1}^\infty \lambda_nM_{b_n} \otimes a_n \right)
	\left( \sum_{j=1}^{k+1} y_j \otimes h_j' \right) 
	\right\ra_\otimes .
	\end{align*}
	Notice that
	\begin{align*}
	\sum_{i=1}^{k+1}y_i \otimes h_i' & = 
	\sum_{i=1}^k y_i \otimes h_i' + y_{k+1} \otimes h_{k+1}'=
	w  - \sum_{i=1}^k |y_i(h_i') \ra \la e| \otimes e \\
	&=w -T_e \left( \sum_{i=1}^k y_i \otimes h_i' \right) =(Id -T_e) (w)
	\end{align*}
	where $Id$ denotes the identity operator on $\si_2 (\hi ) \otimes \hi$,
	which finishes the proof of the forward direction.
	
	For the other direction, suppose that $\widetilde{L}_{\otimes,e} \geq 0$ for some $e \in \hi$
	with $\| e \| =1$.  Let $k \in \N$, $y_1, \dots , y_k \in \mathcal{S}_2(\hi)$ 
	and $h_1', \dots , h_k' \in \hi$ such that $\sum_{i=1}^ky_i(h_i')=0$.
	Let $w=\sum_{i=1}^k y_i \otimes h_i' \in \si_2 (\hi )\otimes \hi$. Then,
	\begin{align} \label{DLegeq0}
	0 &\leq  \la w , \widetilde{L}_{\otimes,e} (w) \ra_\otimes 
	\\ &= \left\la w,  
	(Id - T_e)^*
	\left( \sum_{n \in N}\lambda_nM_{b_n} \otimes a_n \right)
	(Id -T_e) (w) 
	\right\ra_\otimes \nonumber \\ 
	&= \left\la 
	(Id -T_e) w, 
	\left(\sum_{n \in N}\lambda_n M_{b_n} \otimes a_n\right)
	(Id -T_e) (w) 
	\right\ra_\otimes .
	\end{align}
	Notice that 
	\[
	T_e (w) =T_e \left( \sum_{i=1}^k y_i \otimes h_i' \right)=
	\left| \sum_{i=1}^k y_i(h_i') \right\ra \left\la e \right| \otimes e = |0 \ra \la e | \otimes e = 0.
	\]
	Hence Inequality~(\ref{DLegeq0}) gives
	\begin{align*}
	0 &\leq   
	\left\la 
	Id  (w), 
	\left(\sum_{n \in N} \lambda_nM_{b_n} \otimes a_n \right)  
	Id  (w) 
	\right\ra_\otimes \\
	& = 
	\sum_{i,j=1}^{k} \sum_{n\in N}
	\left\la 
	y_i \otimes h_i' ,\lambda_n M_{b_n} \otimes a_n (y_j \otimes h_j') 
	\right\ra_\otimes \\ 
	& =  \sum_{i,j=1}^{k} \sum_{n\in N} \lambda_n
	\la y_i \otimes h_i' , y_jb_n \otimes a_n(h_j') \ra_\otimes\\ 
	& =   \sum_{i,j=1}^{k} \sum_{n\in N}\lambda_n
	\Tr(y_i^*y_jb_n) \la h_i' ,a_n(h_j') \ra_\hi \\ 
	& =  \sum_{i,j=1}^{k+1} \la h_i , \widetilde{L}(y_i^*y_j)h_j' \ra_\hi .
	\end{align*}
	Therefore $\widetilde{L}$ is conditionally completely positive.  This completes the proof.\qed\end{proof}

The proof of Lemma~\ref{Lem:Mpositive} reveals the following: \begin{remark} Let $\ai=\{\sum_{i=1}^k y_i\otimes h_i'\in\bi(\hi)\otimes \hi:\sum_{i=1}^ky_i(h_i')=0\}$. Then
	\begin{itemize}
		\item For every $w=\sum_{i=1}^ky_i\otimes h_i'\in \bi(\hi)\otimes \hi$ there exists $y_{k+1}\in \bi(\hi)$ and $h_{k+1}'\in \hi$ such that $\sum_{i=1}^{k+1}y_i\otimes h_i'\in\ai$ and $(Id-T_{h_{k+1}'})(w)=\sum_{i=1}^{k+1}y_i\otimes h_i'$.
		\item If a bounded operator $\widetilde{L}$ on $\hi$ has form \eqref{genform2} then $\widetilde{L}$ is completely positive if and only if the operator $\sum_{n=1}^\infty \lambda_nM_{b_n}\otimes a_n:S_2(\hi)\otimes \hi \to S_2(\hi)\otimes \hi$ is positive.
		\item For every $e\in \hi$ we have $\ai\subseteq \ker T_e$.
	\end{itemize}
\end{remark}

We are now ready to present the

\begin{proof}[Proof of Theorem~\ref{Thm:final2}] Since $\widetilde{L}$ generates a strongly continuous semigroup of contractions, we have that $\lambda\in\rho(\widetilde{L})$ for all $\lambda>0$ by the Hille-Yosida Generation Theorem (e.g. Theorem~3.5 of \cite{EN}). Further, $D(\widetilde{L})$ is dense in $S_2(\hi)$ by Theorem~3.1.16~of~\cite{br} and $\widetilde{L}$ is star-preserving by Corollary~\ref{corollary2}, and so $K:=(\widetilde{L}-I)^{-1}$ is star-preserving by Lemma~\ref{Lem:selfadjoint matrices} as the inverse of a star-preserving map with dense domain. Because $\widetilde{L}$ has compact resolvent by assumption, we have that $K$ is furthermore compact. Thus, $K^\dag K$ is compact, self-adjoint, and star-preserving, and so Lemma~\ref{Lem:selfadjoint matrices2} implies \[K^\dag K=\sum_{n=1}^\infty \sigma_n^2|v_n\ra\la v_n|,\] where $\{\sigma_n^2\}_{n\in \N}$ are the nonzero eigenvalues of $K^\dag K$ corresponding to the system $\{v_n\}_{n\in\N}$ of self-adjoint orthonormal eigenoperators. This notation is chosen so that, following Section~2.2 of \cite{EnglHanke}, the singular value expansion of $K$ can be written \[K=\sum_{n=1}^\infty \sigma_n|u_n\rangle\langle v_n|,\] where $\{u_n\}_{n\in\N}$ are self-adjoint orthonormal eigenoperators of $KK^\dag$ given by the relation $\sigma_nu_n:=Kv_n$. By Theorem~2.8 of \cite{EnglHanke} we have that \[\widetilde{L}-I=K^{(-1)}=\sum_{n=1}^\infty\frac{1}{\sigma_n}|v_n\rangle\langle u_n|,\] and hence \[\widetilde{L}=I+\sum_{n=1}^\infty\frac{1}{\sigma_n}|v_n\rangle\langle u_n|,\] proving \eqref{genform2}. Equation \eqref{compactresolventL} follows from \eqref{genform2} and \eqref{LtotildeL}. Part (c) is valid since $L(I)=0$, hence $L_{(h_n)}(I)=0$, and hence $$I+\sum_{n=1}^\infty \lambda_n\la h_n,\rho^{1/2}\ra i_{\rho^{(-1)}}(a_n)=0.$$ Part (d) follows from Lemma~\ref{Lem:Mpositive}; indeed, $(\widetilde{T}_t)_{t\geq0}$ is a completely positive semigroup by Proposition~\ref{proposition4.3}, and so $\widetilde{L}$ is conditionally completely positive by Theorem~\ref{theoremccp}.\qed\end{proof}

\begin{remark} Let $\hi$ be a finite dimensional Hilbert space, and let $(T_t)_{t \geq 0}$ be a QMS on $\mathcal{B}(\hi)$. Then $(T_t)_{t \geq 0}$ possesses an invariant faithful normal state \cite[Theorem 4.2]{fr5} and so Theorem~\ref{Thm:final2} applies. The sum in \eqref{compactresolventL} is finite, and hence by linearity/conjugate linearity we can write \begin{align*}
	L_{(h_n)} &= I+\sum_{n}\lambda_n|i_{\rho^{(-1)}}(a_n^+-a_n^-)\ra\la i_\rho(b_n^+-b_n^-)| \\ &= I+\sum_{m}\lambda_m'|i_{\rho^{(-1)}}(c_m)\ra\la i_\rho(d_m)|, \end{align*} where $c_m\in\{a_n^+,a_n^-:n\},d_m\in\{b_n^+,b_n^-:n\}$, and $\lambda_m'\in\{\pm\lambda_n:n\}$. In particular, $c_m,d_m\geq0$ and $\lambda_m'\in \R$. Thus, for every $x\in \bi(\hi)$, \begin{align*}
	|i_{\rho^{(-1)}}(c_m)\ra\la i_\rho(d_m)| x &= \la i_\rho(d_m),x\ra i_{\rho^{(-1)}}(c_m) \\&= \la \rho^{1/4}d_m\rho^{1/4},x\ra \rho^{-1/4}c_m\rho^{-1/4} \\
	&= \rho^{-1/4}\sqrt{c_m}\Tr\left(\rho^{1/4}d_m\rho^{1/4}x\right)\sqrt{c_m}\rho^{-1/4}\\
	&= \rho^{-1/4}\sqrt{c_m}\Tr\left(\sqrt{d_m}\rho^{1/4}x\rho^{1/4}\sqrt{d_m}\right)\sqrt{c_m}\rho^{-1/4}.\end{align*}

Now, fix any orthonormal basis $(E_k)_k$ of $\bi(\hi)$. Then for every $A\in \bi(\hi)$ we have $$\Tr(A)=\sum_k E_k AE_k^*.$$ Thus,  $$\Tr\left(\sqrt{d_m}\rho^{1/4}x\rho^{1/4}\sqrt{d_m}\right)=\sum_kE_k\sqrt{d_m}\rho^{1/4}x\rho^{1/4}\sqrt{d_m}E_k^*,$$ and so $$L_{(h_n)}(x)=x+\sum_{m,k}\lambda_m'\rho^{-1/4}\sqrt{c_m}E_k\sqrt{d_m}\rho^{1/4}x\rho^{1/4}\sqrt{d_m}E_k^*\sqrt{c_m}\rho^{-1/4}.$$ By defining $y_\ell=\rho^{-1/4}\sqrt{c_m}E_k\sqrt{d_m}\rho^{1/4}$ and $\lambda_\ell'$ to be the corresponding $\lambda_m'$, we obtain $$L_{(h_n)}(x)=x+\sum_\ell \lambda_\ell' y_\ell x y_\ell^*.$$ Moreover, since $L_{(h_n)}(I)=L(I)=0$ we obtain as in the proof of Theorem~\ref{Thm:final2}(c) that $$I+\sum_\ell \lambda_\ell y_\ell y_\ell^*=0.$$ Thus $$ L_{(h_n)}(x)=\frac{1}{2}Ix+\frac{1}{2}xI+\sum_\ell \lambda_\ell' y_\ell x y_\ell^*= \sum_\ell \lambda_\ell'\left(y_\ell x y_\ell^* -\frac{1}{2}\{y_\ell y_\ell^*,x\}\right)$$ which resembles the standard GKSL form developed in \cite{GKS} and \cite{L}, except with Hamiltonian part zero and without the demand that the jump operators $y_\ell$ are traceless. A comparison of such GKSL form to the standard GKSL form, including conversion between the two, is discussed in section 2.1 of \cite{AW}.\end{remark}

\section{Conflict of interest} The authors declare that they have no conflict of interest.

\begin{acknowledgements} We would like to thank Franco Fagnola. His contributions to the results of this work were vital from its conception to the final touches. Without his help the existence of this work would not be possible. We would also like to thank the referee for pointing our attention to a mistake in the original version of this article.
\end{acknowledgements}

\bibliographystyle{spmpsci}      
\bibliography{references}   

\end{document}